\def\orbisingular{\rotatebox[origin=c]{70}{$\prec$}}
\definecolor{darkblue}{rgb}{0.05,0.25,0.65}
\definecolor{darkgreen}{RGB}{20,140,10}
\definecolor{lightgray}{rgb}{0.9,0.9,0.9}
\definecolor{darkorange}{RGB}{200,100,5}
\definecolor{darkyellow}{rgb}{.91,.91,0}
\definecolor{lightolive}{RGB}{225, 220, 185}
\newlength{\dhatheight}
\newcommand{\doublehat}[1]{%
    \settoheight{\dhatheight}{\ensuremath{\hat{#1}}}%
    \addtolength{\dhatheight}{-0.35ex}%
    \hat{\vphantom{\rule{1pt}{\dhatheight}}%
    \smash{\hat{#1}}}}
\newtheorem{theorem}{Theorem}[section]
\newtheorem{proposition}[theorem]{Proposition}
\theoremstyle{definition}
\newtheorem{definition}[theorem]{Definition}
\newtheorem{remark}[theorem]{Remark}
\newcommand{\dd}{\mathrm{d}}
\begin{document}
\title{Gauge potentials on the M5 brane in twisted equivariant cohomotopy}

\author{Pinak Banerjee\thanks{pinakb24@vt.edu}}
\affil{Department of Physics,
\\
Virginia Tech,
\\
850 West Campus Drive,  
\\
Blacksburg,
VA 24061, USA}

\maketitle
\begin{abstract}
In this article, we work out some variations on the discussion of the C-field flux densities in the Sati-Schreiber program. We start by explaining the need for global completion of the field content: the fluxes, the gauge potentials and the gauge transformations on the worldvolume of a single M5 brane in eleven-dimensional supergravity, and how this is encoded by the choice of flux quantization law. 

\smallskip 
Assuming Hypothesis H that the 4-flux in M-Theory is flux quantized in a non-abelian cohomology theory called 4-cohomotopy, and the three-flux on the M5 brane worldvolume in (twisted) 3-cohomotopy, we generalize some previous calculations known in the literature to include twisting by background gravity and placing M5 branes on orbifolds. We show that the null concordances of cohomotopically charged fluxes give rise to the traditional gauge potentials and the null concordances of concordances give rise to the corresponding gauge transformations via surjections, in the cases of tangentially twisted cohomotopy, twistorial cohomotopy and equivariant twistorial cohomotopy. We construct the surjections explicitly for these cases and check the consistency relations with the corresponding Bianchi identities. 

\smallskip 
Thus, we show how the traditional formulas for the local gauge potentials on M5 brane worldvolume  on curved spacetimes and orbifolds are indeed
reproduced by the homotopy theory and, as such, become amenable to global completion in cohomotopical charge quantization.
\end{abstract}
\begin{center}
\begin{minipage}{15.7cm}
\tableofcontents
\end{minipage}
\end{center}
\pagebreak

\section{Introduction}

Traditional foundations of theoretical physics have mostly relied on perturbation theory, applied to weakly coupled systems. There is a general need of understanding of {\it non-perturbative} phenomenon for strongly coupled quantum theories precisely.

\smallskip 
Initially, String Theory originated to explain hadronic bound states, later researchers figured out that it can be a mathematically consistent model for {\it quantum gravity}. Though it admits strongly coupled regimes, dualities in String Theory can relate strong coupling regimes to weak coupling regimes. So, primarily, most of the string dynamics can be understood only perturbatively. 
Now, there emerged {\it seemingly different} five superstring theories, and researchers were looking for a single unified theory to explain them all. Finally, in 1995, all five superstring theories got unified under the name of one single elusive theory with a famous working title: {\it M-Theory }\cite{W-95}, which is the infinity string coupling limit of the Type IIA String Theory. Also see \cite{U} for the motivation by the unification of dualities , and \cite{4} for a good overview.

\smallskip 
M-Theory (see \cite{Duff96}\cite{Duff99}) is an eleven-dimensional strongly coupled theory, reducing all the fields in the five superstring theories to just a pair of fields on the bosonic side: gravity and the three-form C-field. On the fermionic side, we only have the gravitino, the supersymmetric partner of the graviton. Supersymmetry requires us to have the same number of bosonic and fermionic degrees of freedom. The theory is supersymmetric, having 128 on-shell degrees of freedom both on the bosonic (44 for the graviton and 84 for the C field) and the fermionic side (128 for the gravitino). The C-field couples electrically to the M2 brane and magnetically to the M5 brane. The low energy classical limit of M-Theory is believed to be $D=11, \mathcal{N}=1$ SuGra theory {\cite{BST}\cite{MS-05}}{\cite{GSS24-SuGra}}. M-Theory is speculated to be purely non-perturbative by nature, as we do not have any continuous parameter to tune as a coupling constant. So mathematically formulating M-Theory can provide valuable insights to truly understanding non-perturbative QFT and non-perturbative String Theory.

\smallskip 
Non-perturbative effects come in particular not continuously connected to the vacuum, and these are topological configurations whose definition requires to accompany the field's equations of motion by a flux quantization law.  Non-perturbatively, the charge sectors are classified by {\it plain homotopy theory}, the full field content (gauge potentials and corresponding gauge transformations) is encoded in {\it geometric homotopy theory}, and thus they are encoded in {\it non-abelian cohomology theories} \cite[\S 2.1]{FSS23-Char} and their {\it differential} refinement \cite[\S 4.38]{FSS23-Char} thereof. {\it Flux quantization laws} \cite[\S 3]{SS25-Flux} tell us about the concerned (non-abelian) {\it cohomology} theories for the physical fields in a theory; precisely speaking, it imposes the constraint that the deformation classes of the fluxes must be in the image of the character map of the concerned cohomology theory.  Flux quantization laws also encode the {\it extra} field content on top of the topological charges required to make a theory {\it globally} well-defined, which tells us about the consistent {\it gluing} of these higher gauge fields on various patches. Thus, to make sense of the {\it full} theory, we need the flux quantization law.
 Therefore, understanding flux quantization law for a theory really can throw some light on what the theory {\it really} is, and we want to understand it for the case of M Theory \cite{Wi-Flux}.

\smallskip 
The C-field in 11D SuGra is a {\it higher} gauge field, having flux densities as higher-degree differential forms, and with gauge transformations accompanied by higher-order gauge-of-gauge transformations. The general theory of flux quantization says that any cohomology theory flux-quantizing the C-field fluxes has a classifying space whose (minimal) Sullivan model \cite[\S 3.31]{FSS23-Char} has as generators the 
pre-geometric field species subject to differential relations of the same form as the pre-geometric Bianchi identities of the C-field; and {\it rational} homotopy theory \cite[\S 3.2]{FSS23-Char} shows that these are precisely the spaces of the same rational homotopy type of the 4-sphere \cite{Sa-13}. In other words, the equations of motion for the C-field flux in 11D SuGra are the same as the Sullivan model of the 4-sphere in rational homotopy theory; which implies that the concerned cohomology theory should be {\it rationally} equivalent to a non-abelian cohomology theory, namely {\it 4-cohomotopy} \cite[\S 2.10]{FSS23-Char}. In order to lift this to the level of {\it full} cohomology, there are infinitely many inequivalent choices, whose rational avatars coincide with rational 4-cohomotopy. Here, we'll take the simplest choice: 4-cohomotopy {\it itself} to be the {\it full} cohomology theory. This is the idea of {\it Hypothesis H} \cite[\S 2.5]{Sa-13}. It is worthwhile to note that cohomotopy is not like any other cohomology theory, but {\it initial} among all of them. This makes sense as the cohomology theory for M-Theory, which is supposed to be the fundamental foundation of all theories of physics. See \cite{2D}, where a formalism for the integration of non-abelian p-forms is developed, and it is directly related to defining non-abelian homotopies on worldvolume of the M5-brane.

\smallskip 
Now, twisting by background fields is encoded in a variant of these non-abelian cohomology theories, namely {\it twisted} cohomology \cite[\S 2.29]{FSS23-Char}. So here, if we need to consider coupling to gravity, we need {\it twisted} variant of 4-cohomotopy \cite[\S 2.41]{FSS23-Char}, and for taking the gauge potentials into account, we need {\it differential} refinement of 4-cohomotopy \cite[\S 4.45]{FSS23-Char}, where we have the moduli stack of higher gauge fields as the homotopy pullback of the flux densities along the character map. Now if we want to take the effect of orbifolds, we need another flavour of cohomology, namely {\it equivariant cohomology} \cite[\S 2.37]{SS20-Equitwistorial}.
The combined effect of twisting by background fields and orbifolds is encoded in {\it twisted equivariant cohomology} \cite[\S 2.45]{SS20-Equitwistorial}.

\smallskip 
We will  consider objects in M-theory, mainly M5-branes (\cite{Guven}\cite{Wi97}\cite{FKPZ}\cite{PST}, see \cite{Berman} for a review). 
The full definition of classical on-shell field content and dynamics on a single M5 brane worldvolume is quite subtle. 
The structure of the gauge potentials relating to concordances of flux densities and their gauge transformations as higher concordances in 11D SuGra has been worked out recently, and related to traditional formulas, for the case of the plain C-field in 11D supergravity \cite[\S 2.1.4]{GSS24-SuGra}, and for its coupling to the `self-dual' B-field on probe M5-branes \cite[\S 4.1]{GSS24-FluxOnM5}. This further solidifies the role of {\it homotopy theory} in understanding non-perturbative physics.

\smallskip 
In this brief note, we aim to {\it generalize} to spacetimes that are not necessarily flat. Here, we show that these {\it surjections from concordances and higher concordances to the usual (higher) gauge potentials and (higher) gauge transformations exist for a single M5 brane worldvolume even when we take background gravity and effect of orbifolds into account, and thus the local data of (higher) gauge potentials and (higher) gauge transformations are amenable to cohomotopical flux quantization}. To the best of our knowledge, such calculations have not been done in the literature yet. We generalize the calculations for the gauge potentials on the worldvolume of M5 brane in the presence of coupling to background gravitational charges in tangentially twisted cohomotopy \cite[\S 2.41]{FSS23-Char} and then we generalize to twistorial cohomotopy \cite[\S 2.45]{FSS23-Char}, which encodes information about $\frac{1}{2}$-BPS immersions of M5 brane solutions in 11D SuGra with a Chern-Simons gauge field, or alternatively about heterotic M-Theory. Lastly, we take into account the effect of orbifolds alongside background gravity on M5 brane worldvolume, and this is encoded in twistorial equivariant cohomotopy \cite[\S 2.48]{SS20-Equitwistorial}.

\smallskip 
We use the language of rational homotopy theory \cite[\S 3.2]{FSS23-Char} and differential forms. Thus, for the first two cases of tangentially twisted and twistorial cohomotopy, we'll get to see background gravitational charges as well as gauge potentials arising from concordances and their gauge transformations from concordances of concordances. Here, we'll work in the setup of {\it twisted differential cohomotopy} \cite[\S 5.27]{FSS23-Char}, which is a combined generalization of twisted and differential variants of cohomology at the rational level. When we take the effect of orbifolds into account alongside gravitational charges and gauge potentials in twistorial equivariant cohomotopy, we need to work in the setup of {\it twisted equivariant differential cohomotopy}.

\smallskip 
We start with the Bianchi identities, modified by the gravitational contributions and the fixed locus of the orbifold action. Next, we show here that there exist surjections from the concordances to the usual gauge potentials and from the concordances of concordances to the usual gauge transformations. We explicitly spit out the surjections, and also check their consistency with the (twisted) Bianchi identities.

\smallskip

The paper is organized as follows. In section 2, we'll briefly review flux quantization law \cite[\S 3]{SS25-Flux}. In section 3, we generalize the computations in \cite[\S 4.1]{GSS24-FluxOnM5} to 
include the tangential twists (following \cite{FSS20-H}) which couple these fluxes (on M5-probes of 11D SuGra) to gravity, shifting the Bianchi identities by Pontrjagin forms as known from the Green-Schwarz mechanism (for which cf. review in \cite[\S 12]{FSS23-Char}). Then, we go one step further and calculate them for the case of twistorial cohomotopy \cite[\S 2.14]{FSS22-GS}, which is closely related to heterotic M-Theory. In section 4, we put the M5 brane on orbifolds, along with background gravity, and repeat the calculations for the case of $\mathbb{Z}_{2}$-equivariant $\mathrm{Sp}(1)$-parametrized twistorial cohomotopy.

\bigskip 
\noindent {\bf Acknowledgements.} The author would like to thank Hisham Sati at New York University Abu Dhabi for useful discussions and helping a lot with the manuscript.

\newpage 
\section{Flux Quantization and Hypothesis H}

In this section, we will review some aspects of flux quantization \cite[\S 3]{SS25-Flux} and specialize to Hypothesis H \cite[\S 2.5]{Sa-13}.
Given a suitable higher gauge theory, then (as reviewed in \cite{SS25-Flux}): 
\begin{itemize}
\item[--]its Bianchi identities characterize the duality-symmetric flux densities jointly as a closed (flat) differential form with coefficients in a characteristic $L_\infty$-algebra $\mathfrak{a}$, 
\item[--] the admissible form of the corresponding gauge potentials and their (higher) gauge transformations may systematically be deduced as (iterative) deformations (concordances) of such closed $\mathfrak{a}$-valued differential forms, 
\item[--] in a way that prepares the theory for global completion by  {\it flux quantization}.
\end{itemize}

One aspect that has been largely overlooked in the existing literature is that the Bianchi identities for these (higher) gauge fields hold {\it locally}. Thus, the local gauge potential form fields {\it do not} represent the entire configuration of the higher gauge fields. Further field content or transition data is essential to {\it glue} these gauge potentials on higher intersections of the form $\tilde{X}^{\times^{n}_{X}}$, where X is the spacetime, $\tilde{X}$ is the surjective submersion and we have the projector map $p_{X}:\tilde{X}\rightarrow{X}$.

The admissible choices of this remaining field content are captured by {\it flux quantization laws}. To obtain a comprehensive understanding of flux quantization, information regarding transition data and submersions is essential. This {\it extra} choice actually completes the global picture of the (higher) gauge theories.

For example, consider the Maxwell theory with Bianchi identity (in the absence of currents) for $F_{2}\in \Omega^{2}_{\mathrm{dR}}(X)$
\begin{equation}
    \dd F_{2}=0.
\end{equation}
These identities do not need a transition data to be stated. But in flux quantization, when we declare that $F_{2}$ is the curvature of a $U(1)$ principal bundle with connection $A_{1}$ and thus the theory is quantized in ordinary integral cohomology in degree two denoted by $H^{2}(X, \mathbb{Z})$, then $A_{1}$ is no longer a 1-form on the spacetime X, but on the submersion $\tilde{X}$
\begin{equation}
    \begin{array}{l}
        A_{1}\in \Omega_{\mathrm{dR}}^{1}(\tilde{X}),\qquad \dd A_{1}=F_{2}\,.
\end{array}
\end{equation}
Now, as mentioned above, these local gauge potentials are {\it not} enough to capture the global aspects of the theory, but they do require some extra field content to encode their gluing on higher intersections.
Thus, we have here
\begin{equation}
\renewcommand{\arraystretch}{1.3}
\begin{array}{l}
    A_{0}\in \Omega_{\mathrm{dR}}^{0}(\tilde{X}^{\times^{2}_{X}}),\quad
    \dd A_{0}= \mathrm{pr}_{2}^{*}A_{1}-\mathrm{pr}_{1}^{*}A_{1}\\ \mathrm{pr}_{12}^{*}A_{0}+\mathrm{pr}_{23}^{*}A_{0}-\mathrm{pr}_{13}^{*}A_{0} = n_{123}, \hspace{0.4cm}\text{where} \hspace{0.1cm}n_{123}\in\hspace{0.2cm}C^{0}(\tilde{X}^{\times^{3}_{X}}, \mathbb{Z}).\\
    n_{234}- n_{134}+ n_{124}-n_{123}=0 \hspace{0.2cm} \text{on}\hspace{0.1cm}\tilde{X}^{\times^{4}_{X}}.
    \end{array}
\end{equation}
The cocycle condition on triple intersections $\tilde{X}^{\times^{3}_{X}}$ involving the projectors $\mathrm{p}_{ij}:\tilde{X}^{\times^{3}_{X}}\rightarrow{\tilde{X}^{(i)}\times_{X}\tilde{X}^{(j)}}$ enforces the Dirac quantization condition.

Thus, this extra field content $A_{0}$ encodes the $\check{\mathrm C}$ech cocycle data required for making $A_{1}$ to be a {\it globally} defined connection on a U(1) principal bundle over the spacetime X.

Ordinary flux quantization in $\check{\mathrm C}$ech cohomology has been discussed in \cite[\S 2]{A-85}.

As a second motivating example, consider the $H_{3}$ flux in String Theory. It admits the Bianchi identity\begin{equation}
    \dd H_{3}=0.
\end{equation}
Now, when we declare it to be flux quantized in ordinary integral cohomology in degree three denoted by $H^{3}(X, \mathbb{Z})$, and view $H_{3}$ as the curvature of a circle 2-bundle or a $U(1)$ gerbe, then we have a 2-connection on this circle gerbe, namely $B_{2}$, also known as the NS-NS sector Kalb Ramond field. This $B_{2}$ is no longer a differential two-form on the spacetime X, but on $\tilde{X}$  satisfying the Bianchi identity
\begin{equation}
    B_{2}\in \Omega_{\mathrm{dR}}^{2}(\tilde{X}),\qquad \dd B_{2}=H_{3}.
\end{equation}
Now, this $B_{2}$ alone is not enough to capture the global aspects of the full theory, but we also need to have some extra field content which captures the gluing of the $B_{2}$ on higher intersections. Then we have 
\begin{equation}
\renewcommand{\arraystretch}{1.3}
\begin{array}{l}
    B_{1}\in \Omega_{\mathrm{dR}}^{1}(\tilde{X}^{\times^{2}_{X}}), \quad 
    \dd(B_{1})_{12}= \mathrm{pr}_{2}^{*}B_{2}-\mathrm{pr}_{1}^{*}B_{2}
    \\ 
    B_{0}\in \Omega_{\mathrm{dR}}^{0}(\tilde{X}^{\times^{3}_{X}}), \quad 
    \dd(B_{0})_{123}= \mathrm{pr}_{23}^{*}B_{1}-\mathrm{pr}_{13}^{*}B_{1} + \mathrm{pr}_{12}^{*}B_{1}\\ \mathrm{pr}_{123}^{*}B_{0}+\mathrm{pr}_{134}^{*}B_{0}-\mathrm{pr}_{124}^{*}B_{0}- \mathrm{pr}_{234}^{*}B_{0} = m_{1234}\hspace{1cm}  \text{where}\hspace{0.2cm}m_{1234}\in C^{0}(\tilde{X}^{\times^{4}_{X}}, \mathbb{Z}) .\\
    m_{2345}-m_{1345}+m_{1245}-m_{1235}+m_{1234}=0 \hspace{1cm}\text{on}\hspace{0.2cm}\tilde{X}^{\times^{5}_{X}}\,.
    \end{array}
\end{equation}
This $B_{1}$ is {\it still} insufficient to complete the full theory, as we need information about gluing of $B_{1}$ on further higher intersections which is encoded in $B_{0}$.

The higher ($\check{\mathrm C}$ech) cocycle condition on quartic intersections $\tilde{X}^{\times^{4}_{X}}$ involving the projectors $\mathrm{p}_{ijk}:\tilde{X}^{\times^{4}_{X}}\rightarrow{\tilde{X}^{(i)}\times_{X}\tilde{X}^{(j)}\times_{X}\tilde{X}^{(k)}}$ encodes the integral quantization condition in degree three.

Thus, these extra field content $B_{1}, B_{0}$ (along with the (higher) cocycle conditions) encode the data required for making $B_{2}$ to be a {\it globally} defined 2-connection on a U(1) gerbe over the spacetime X.

As a final motivating example, consider the $G_{4}$ flux in 11D supergravity. It admits the Bianchi identity
\begin{equation}
    \dd G_{4}=0.
\end{equation}
Now, when we declare it to be flux quantized in ordinary integral cohomology in degree four denoted by $H^{4}(X, \mathbb{Z})$, and see $G_{4}$ as the curvature of a circle 3-bundle or a $U(1)$ 2-gerbe, then we have a 3-connection on the circle 2-gerbe, namely the $C_{3}$ field. This $C_{3}$ is not a differential three-form on the spacetime X, but on $\tilde{X}$ satisfying the Bianchi identity
\begin{equation}
    C_{3}\in \Omega_{\mathrm{dR}}^{3}(\tilde{X}),\qquad \dd C_{3}=G_{4}.
\end{equation}
Now, this $C_{3}$ alone is not enough to encode the global aspects of the full theory, but we also require to have some extra field content which captures the gluing of the $C_{3}$ on higher intersections. We thus have
\begin{equation}
\renewcommand{\arraystretch}{1.3}
\begin{array}{l}
    C_{2}\in \Omega_{\mathrm{dR}}^{2}(\tilde{X}^{\times^{2}_{X}}), \quad 
    \dd(C_{2})_{12}= \mathrm{pr}_{2}^{*}C_{3}-\mathrm{pr}_{1}^{*}C_{3}\\ 
    C_{1}\in \Omega_{\mathrm{dR}}^{1}(\tilde{X}^{\times^{3}_{X}}), \quad 
    \dd(C_{1})_{123}= \mathrm{pr}_{23}^{*}C_{2}-\mathrm{pr}_{13}^{*}C_{2} + \mathrm{pr}_{12}^{*}C_{2}\\ 
    C_{0}\in \Omega_{\mathrm{dR}}^{0}(\tilde{X}^{\times^{4}_{X}}), \quad
    \dd(C_{0})_{1234}= \mathrm{pr}_{234}^{*}C_{1}-\mathrm{pr}_{134}^{*}C_{1} + \mathrm{pr}_{124}^{*}C_{1}-\mathrm{pr}_{123}^{*}C_{1}\\ \mathrm{pr}_{2345}^{*}C_{0}+\mathrm{pr}_{1245}^{*}C_{0}+\mathrm{pr}_{1234}^{*}C_{0}-\mathrm{pr}_{1345}^{*}C_{0}- \mathrm{pr}_{1235}^{*}C_{0}= l_{12345} \hspace{1cm} \text{where}\hspace{0.2cm}l_{12345}\in C^{0}(\tilde{X}^{\times^{5}_{X}}, \mathbb{Z}) \\
    l_{23456}-l_{13456}+l_{12456}-l_{12356}+l_{12346}-l_{12345}=0 \hspace{1cm}\text{on}\hspace{0.2cm}\tilde{X}^{\times^{6}_{X}}\,.
    \end{array}
\end{equation}
Now, this $C_{2}$ is {\it yet} insufficient to encode information about the full SuGra theory, as we still need information about the gluing of $C_{2}$ on higher intersections, which is encoded by $C_{1}$, gluing of which on further higher intersections is encoded by $C_{0}$.

The higher ($\check{\mathrm C}$ech) cocycle condition on quintic intersections $\tilde{X}^{\times^{5}_{X}}$ involving the projectors $\tilde{X}^{\times^{5}_{X}}\rightarrow{\tilde{X}^{(i)}\times_{X}\tilde{X}^{(j)}\times_{X}\tilde{X}^{(k)}\times_{X}\tilde{X}^{(l)}}$ encodes information about the integral $G_{4}$ flux quantization condition in degree four.

Thus, these {\it extra} field content $C_{2}, C_{1}, C_{0}$ (along with the (higher) cocycle conditions) encode the data required for making $C_{3}$ gauge potential in 11D SuGra to be a {\it globally} defined 3-connection on a U(1) 2-gerbe over the spacetime X.

One might think of choosing ordinary integral cohomology in degree four $H^{4}(X, \mathbb{Z})$ as the possible flux quantization law for 11D SuGra, but this fails to capture the non-linear Bianchi identity 
$$\dd 2G_{7}=G_{4}\wedge G_{4}$$
which is a hint that the probable flux quantization law for 11D SuGra must be a {\it non-abelian} cohomology theory.
Thus, {\it flux quantization} is essential for understanding the {\it full} (higher) gauge theories.

The flux quantization in non-abelian A-cohomology \cite[\S 2.1]{FSS23-Char}
\begin{equation}
    A(X):= \pi_{0}\mathrm{Maps}(X,A)
\end{equation} implies that the Bianchi identities for the flux densities should lie in the image of the non-abelian character map for non-abelian A-cohomology \cite[\S 4.3]{FSS23-Char}
\begin{equation}
    \mathrm{ch}_{A}:A(X)\longrightarrow H^{1}_{\mathrm{dR}}(X; \mathfrak{l}A).
\end{equation}
where $\mathfrak{l}A\cong{\mathfrak{a}}\in \mathrm{L_{\infty}Algs}$. Therefore, flux densities are classes in $\mathfrak{a}$-valued deRham cohomology.

Given a choice of flux quantization law A, the moduli stack of higher gauge fields $\hat{\mathrm{A}}$ with A-quantized flux densities classifies the {\it differential} non abelian A-cohomology $\hat{A}(X)$ \cite[\S 4.38]{FSS23-Char}, where:
\begin{equation}
    \hat{\mathrm{A}}(X):= \pi_{0}\mathrm{Maps}(X, \hat{A}) := {\mathrm{A}(X)}\times_{H^{1}_{\mathrm{dR}}(X; \mathfrak{l}A)} \Omega^{1}_{\mathrm{dR}}(X; \mathfrak{l}A)
\end{equation}
and the higher gauge potentials $\vec{A}$ are viewed as {\it homotopies} between the character map of charges $\chi$ and the flux densities $\vec{F}$
\begin{equation}
    \vec{A}: \mathrm{ch}(\chi) \implies \vec{F}\,.
\end{equation}
For more details about non-abelian flux quantization, see \cite[\S 3]{SS25-Flux}.

For 11D SuGra, the conjectured flux quantization law is {\it Hypothesis H}
\cite[\S 2.5]{Sa-13}, which states that the $G_{4}$ flux is quantized in a non-abelian cohomology theory called {\it 4-cohomotopy} 
\begin{equation}
    \pi^{4}(X):= \pi_{0}\mathrm{Maps}(X, S^{4})
\end{equation} and the `self dual' $H_{3}$ flux on the worldvolume of M5 brane $\Sigma$ is charge quantized in {\it 3-cohomotopy} (twisted by background M-brane charges $[c_{3}, c_{6}]\in \pi^{4}(X)$)
\begin{equation}
    \pi^{3+\phi^{*}(c_{3},c_{6})}(\Sigma):= \pi_{0}\mathrm{Maps}(\Sigma, S^{7})_{/S^{4}}
\end{equation} where we have the embedding map $\phi:\Sigma\longrightarrow X$.

Upon a secondary completion to {\it differential 4-cohomotopy} $\hat{\pi}^{4}(X)$ we recover the $C_{3}$ three-form gauge potential in 11D SuGra and from $\hat{\pi}^{3+\phi^{*}(c_{3},c_{6})}(\Sigma)$ we obtain the $B_{2}$ two-form gauge potential on the worldvolume of M5 brane.

In this paper, we will explore these potentials from concordances and their gauge transformations from higher concordances for the case of twisting by background gravity, which modifies the Bianchi identities by Pontrjagin forms.

Say, for the $G_{7}$ flux, we'll now have 
$$\dd 2G_{7}=G_{4}\wedge G_{4}- \tfrac{1}{16}p_{1}(\omega)\wedge{p_{1}(\omega)}- \chi_{8}$$
and also for the $H_{3}$ flux on M5 brane worldvolume we have
$$\dd H_{3}=\tilde{G_{4}}-\tfrac{1}{2}p_{1}(\omega).$$ We will return to them in the next section, along with the corresponding gauge potentials and {\it} modified Bianchi identities.

\section{Gauge potentials on M5-probes of 11D SuGra in Twisted cohomotopy}
We want to couple the theory to background gravitational charges. 

\subsection{In Tangentially twisted cohomotopy}
\medskip
Since the only other bosonic field apart from $C_{3}$ in 11D SuGra is the metric, the concerned cohomology theory must be the {\it tangentially} twisted 4-cohomotopy. In other words, C-field flux is charge quantized in tangentially Sp(2) twisted 4-cohomotopy \cite[\S 3.1]{FSS20-H}.\footnote{D=11 SuGra stands out as its only other bosonic field besides gravity is the $C$-field. This means that the possible twistings of the C-field flux quantization can only be by the gravitational field, namely by the Spin-frame bundle of the spacetime $X$ (the principal bundle underlying its tangent bundle). By the general rules of twisted cohomology and assuming Hypothesis H on flat spacetimes, this implies that the possible twistings are given by $\infty$-actions of (subgroups of) the Spin-4 group on the 4-sphere. The canonical action is that of Spin(5) via the defining action of SO(5) on $S^{4} = S(\mathbb{R}^{5})$ regarded as the unit sphere in $\mathbb{R}^{5}$. There is an isomorphic (but subtly different, which we won't mention here) action of $Sp(2)\simeq Spin(5)$ on $S^{4}$. The integral cohomology of $S^{4}\sslash Sp(2)\simeq \mathrm{BSpin}(4)$ is generated from $\frac{1}{2}p_{1}$ and the combination $\frac{1}{2}\chi_{4}+\frac{1}{4}p_{1}$.}
We define {\it tangentially twisted or J twisted Cohomotopy} (described in \cite[\S 2.1]{FSS20-H}, \cite[\S 2.41]{FSS23-Char})
\begin{equation}
    \pi^{4+\tau}(X):=\mathrm{H}^{\tau}(X; S^{4})
    \end{equation}
    where we have the {\it twist} $\tau:X\longrightarrow{\mathrm{BSp}}(2)$.\footnote{The first non-trivial check of the tangential twisting is its implication of the notorious shifted integral flux quantization of the 4-flux density, a widely expected hallmark of M-theory. It says that not the deRham cohomology class of $G_{4}$ but its shift by one-fourth of the Pontrjagin 4-form on spacetime is the real image of an integral cohomology class. In non-abelian cohomology this condition falls out naturally as the pullback of $\frac{1}{2}\chi_{4}$ (half of the Euler class or the volume form on $S^{4}$) is interpreted as the $G_{4}$-flux under Hypothesis H, and the pullback of  $\frac{1}{2}\chi_{4}+\frac{1}{4}p_{1}$ from $S^{4}\sslash Sp(2)$ to $X$ implies $G_{4}+\frac{1}{4}p_{1}$ is integral itself as it is the image of the pullback of an integral form.}

    Also, we have the Borel-equivariantized quaternionic Hopf fibration $$h_{\mathbb{H}}\sslash\mathrm{Sp}(2):S^{7}\sslash\mathrm{Sp}(2)\longrightarrow{S^{4}}\sslash\mathrm{Sp}(2)$$ which will be useful here.
\begin{definition}[\bf Flux densities]
We are concerned with {\it flux densities} of the following form (\cite[\S 3.7]{FSS20-H}): 
\begin{equation}
\renewcommand{\arraystretch}{1.3}
  \label{TheFluxDensities}
	\Omega_{\mathrm{dR}}^{1}\left(
      -;
      \,
      \mathfrak{l}_{S^{4}\sslash \mathrm{Sp}{(2)}} S^{7}\sslash \mathrm{Sp}(2)\right)_{\mathrm{clsd}}
        :=
      \left\{\begin{array}{l|l}
		H_{3} & 
          \mathrm{d}\, H_{3}=\tilde{G}_{4}-\tfrac{1}{2} p_{1}(\omega) \\
		G_{7} & \mathrm{d}\, G_{7}=\tfrac{1}{2} \tilde{G}_{4}\left(\tilde{G}_{4}-\tfrac{1}{2} p_{1}(\omega)\right) \\
		G_{4} & \mathrm{d}\, G_{4}=0 \\
		\tfrac{1}{2} p_{1}(\omega) & d\left(\tfrac{1}{2} p_{1}(\omega)\right)=0
	\end{array}\right\}
    \,,
\end{equation}
where we set
\begin{equation}
    \tilde{G}_{4}
    := 
    G_{4} + \tfrac{1}{4} p_{1}(\omega)
\end{equation}
and where $p_1(\omega)$ denotes the first Pontrjagin form for a given spin-connection on spacetime:
\begin{equation}
    \omega 
    \in 
    \Omega^{1}_{dR}\big(
      \tilde{X}; 
      \mathfrak{so}(d,1)
    \big)
    \,,
\end{equation}
 where $\tilde{X}:= \sqcup_{i} U_{i}$ for $\big\{ U_{i} \xrightarrow{\iota_i} X \}_{i \in I}$ an open cover of spacetime.
\end{definition}
\begin{remark}
Here we have already assumed $\widehat{M5}$ structure\footnote{$\widehat{M5}$ is the M theoretic analog of the Fivebrane structure for NS5 brane sigma-model, a topological condition on spacetime that is a higher degree analog of ``String structure''. For our case, $\widehat{M5}$ denotes the trivialization of the Euler class, a degree-8 polynomial in the Pontrjagin forms of spacetime to vanish. It implies the (half) integrality of the M2-brane Page charge. }, thus the eight form $\chi_{8}$ coming from $l B\mathrm{Sp}(2)$ vanishes, cf. \cite[Ex. 3.2]{FSS21-Hopf}. Also note that
\begin{equation}
    \mathrm{CE}\big(
      \mathfrak{l} B\mathrm{Sp}(2)
    \big)
      \;\simeq\;
    \mathbb{R}\left[
      \begin{array}{c}
        \tfrac{1}{2}p_{1}(\omega)
        \\
        \chi_{8}
      \end{array}
    \right]
      \Big/
    \left(
      \begin{array}{ccl}
        \mathrm{d} (\tfrac{1}{2}p_{1}(\omega)) &=& 0, 
        \\
        \mathrm{d}\chi_{8} &=& 0
      \end{array}
    \right).
\end{equation}
\end{remark}
\begin{remark}
    Let $\omega$ be a connection 1-form on a principal bundle with structure group $SO(d,1)$ taking values in the Lie algebra $\mathfrak{so}(d,1)$, that is 
$\omega\in \Omega_{\mathrm{dR}}(\tilde{X}, \mathfrak{so}(d,1))$.
Then the Chern-Simons 3-form of the $\mathfrak{so}(d,1)$ valued spin connection $\omega$, on each $U_i$ is given by,
\begin{equation}
    \mathrm{CS}(\omega)
    \,=\, {\mathrm Tr}(\omega\wedge d\omega + \tfrac{2}{3} \omega \wedge \omega \wedge \omega)
\end{equation}
where the trace is taken over the fundamental/vector representation of $\mathfrak{so}(d,1)$,
satisfying 
\begin{equation}
  \mathrm{d} \, \mathrm{CS}(\omega) 
  \,=\, 
  \iota_i^\ast \tfrac{1}{2} p_{1}(\omega)
  \;\;\;
  \in
  \;
  \Omega^4_{\mathrm{dR}}(U_i)
  \,.
\end{equation}
The Chern–Simons form itself is not gauge invariant, but its exterior derivative is. 

{\it Globally}, it is a 3-form supported on $\tilde{X}$, that is, $$CS(\omega)\in \Omega_{\mathrm{dR}}^{3}(\tilde{X}, \mathfrak{so}(d,1).$$

\begin{definition}[\bf Traditional gauge potentials and gauge transformations]
\label{TraditionalFormulas}
Given such flux densities \eqref{TheFluxDensities} on some $U_i$, 
a traditional choice of {\it gauge potentials} is 
\begin{equation}
  \label{TraditionalGaugePotentials}
    \renewcommand{\arraystretch}{1.4}
	\left\{\begin{array}{l|l}
		C_{3} \in \Omega_{dR}^{3}\left(U_{i}\right) & \dd C_{3}=\tilde{G}_{4} \\
		C_{6} \in \Omega_{dR}^{6}\left(U_{i}\right) & \dd C_{6}= G_{7}-\tfrac{1}{2} C_{3}\left(\tilde{G}_{4}-\tfrac{1}{2} p_{1}(\omega)\right) \\
		B_{2} \in \Omega_{dR}^{2}\left(U_{i}\right) & \dd B_{2}= H_{3}-C_{3}+ \mathrm{CS}(\omega)
	\end{array}\right\}.
\end{equation}
Now, a {\it gauge transformation} between a pair of such gauge potentials is
\begin{align}
  \label{GaugeTransformations}
  \hspace{-3mm}
	\left(C_{3}, C_{6}, B_{2} \right) \sim\left(C_{3}^{\prime}, C_{6}^{\prime}, B_{2}^{\prime} \right) & \Longleftrightarrow 
    \nonumber 
    \\
    \exists & \left\{\!\!
    \renewcommand{\arraystretch}{1.3}
    \begin{array}{l|l}
		C_{2} \in \Omega_{dR}^{2}\left(U_{i}\right) & \dd C_{2}=C_{3}^{\prime}-C_{3} \\
		C_{5} \in \Omega_{dR}^{5}\left(U_{i}\right) & \dd C_{5}=C_{6}^{\prime}-C_{6}-\tfrac{1}{2} C_{3}^{\prime} C_{3}
        -
       \tfrac{1}{4} C_{2}
        \,
        p_1(\omega)
        \\
		B_{1} \in \Omega_{dR}^{1}\left(U_{i}\right) & \dd B_{1}=B_{2}^{\prime}-B_{2}+C_{2} -\int_{t\in[0,1]} \mathrm{Tr}((\omega^{\prime}-\omega)\wedge\omega
        _{t}) dt
        \\
	\end{array}
    \!\!
    \right\}
\end{align}
where $$
\omega_{t}=\omega+t(\omega^{\prime}-\omega)\,.
$$
\end{definition}

Our goal now is to (re-)derive these traditional formulas (Def. \ref{TraditionalFormulas}) and to lift them to {\it concordances} of flat $L_\infty$-valued forms. To recall
\begin{definition}[\bf Null-concordances of flux densities]
\label{NullConcordancesOfFluxDensities}
Given flux densities as in \eqref{TheFluxDensities}, we say that a {\it null concordance} is
\begin{equation}
  \label{NullConcordances}
	\begin{array}{c}
		\left(\hat{G}_{4}, \hat{G}_{7}, \hat{H}_{3}, \hat{p}_{1}(\omega)\right) \in \Omega_{\text{dR}}^{1}\left(U_{i} \times[0,1], \mathfrak{l}_{S^{4}\sslash \mathrm{Sp}(2)} S^{7}\sslash \mathrm{Sp}(2)\right)_{ \mathrm{clsd} }
	\end{array}
\end{equation}
    such that
\begin{equation}
  \renewcommand{\arraystretch}{1.7}
    \begin{array}{l}
		\iota_{0}^{*}\left(\hat{G}_{4}, \hat{G}_{7}, \hat{H}_{3}, \hat{p}_{1}(\omega)\right)
          =
        0 
        \\
		\iota_{1}^{*}\left(\hat{G}_{4}, \hat{G}_{7}, \hat{H}_{3}, \hat{p}_{1}(\omega)\right)
          =
        \left(G_{4}, G_{7}, H_{3}, p_{1}(\omega)\right)\,;
    \end{array}
\end{equation}
and given a pair of these, we say that a {\it concordance-of-concordances} between them is
\begin{equation}
  \label{ConcordanceOfConcordances}
	\begin{array}{l}
		({\hat{\hat{G}}}_{4}, {\hat{\hat{G}}}_{7}, {\hat{\hat{H}}}_{3}, {\hat{\hat{p}}}_{1}(\omega, \omega^{\prime})) 
              \;\in\; 
            \Omega_{dR}^{1}(U_{i} \times \underbrace{[0,1]}_{t} \times \underbrace{[0,1]}_{s}; \mathfrak{l}_{S^{4}\sslash \mathrm{Sp}(2)}S^{7}\sslash \mathrm{Sp}(2))_{ \mathrm{clsd }}
    \end{array}
\end{equation}
such that
\begin{equation}
  \renewcommand{\arraystretch}{1.8}
  \begin{array}{l}
	\iota_{s=0}^{*}\left({\hat{\hat{G}}}_{4}, {\hat{\hat{G}}}_{7}, {\hat{\hat{H}}}_{3}, {\hat{\hat{p}}}_{1}(\omega, \omega^{\prime}) \right)
    =
    \left(\hat{G}_{4}, \hat{G}_{7}, \hat{H}_{3}, \hat{p}_{1}(\omega)\right) 
    \\
	\iota^{*}_{s=1}\left({\hat{\hat{G}}}_{4}, {\hat{\hat{G}}}_{7}, {\hat{\hat{H}}}_{3}, {\hat{\hat{p}}}_{1}(\omega, \omega^{\prime})\right)
    =
    \left(\hat{G}_{4}^{\prime}, \hat{G}_{7}^{\prime}, \hat{H}_{3}^{\prime}, \hat{p}_{1}(\omega^{\prime})\right) \\
			\iota^{*}_{t=0}\left({\hat{\hat{G}}}_{4}, {\hat{\hat{G}}}_{7}, {\hat{\hat{H}}}_{3}, {\hat{\hat{p}}}_{1}(\omega, \omega^{\prime}) \right)=0 \\
			{ \iota}^{*}_{t=1}\left({\hat{\hat{G}}}_{4}, {\hat{\hat{G}}}_{7}, {\hat{\hat{H}}}_{3}, {\hat{\hat{p}}}_{1}(\omega, \omega^{\prime}) \right) = \operatorname{pr}_{U_{i}}^{*}\left(G_{4}, G_{7}, H_{3}, p_{1}(\omega)\right),
		\end{array}\\
    \end{equation}
	where 
    $$
      \mathrm{pr}_{U_{i}}: U_{i} \times[0,1]_{s} \longrightarrow U_{i}
    $$
    is projection onto the first factor.
\end{definition}    
\begin{remark}
    In the remainder of this article, we can identify $2 \mathrm{CS}(\omega)$ with $s_{3}(\omega)$, which will appear later in the expression for $\hat{p}_{1}(\omega)$. 
Also, $s_{2}(\omega, \omega')$ denotes the concordance between two Chern-Simons forms with spin connections $\omega$ and $\omega^{\prime}$, the usual trangression form  $s_{2}(\omega, \omega')$ given by 
\begin{equation}
\dd s_{2}(\omega, \omega'):= s_{3}^{\prime}- s_{3}= 2[\mathrm{CS}(\omega^{\prime})- \mathrm{CS}(\omega)]\\
=2 \int_{t\in[0,1]}\frac{d}{dt}\mathrm{CS}(\omega
_{t})dt
\end{equation}
where $\omega_{t}=\omega+ t (\omega^{\prime}-\omega)$.
Explicitly, it is given in terms of the spin connections $\omega$, $\omega^{\prime}$ as
\begin{equation}
s_{2}(\omega, \omega')=2 \int_{t\in[0,1]} \mathrm{Tr}(\omega^{\prime}-\omega)\wedge \omega_{t} dt.
\end{equation}
The form $s_{2}(\omega, \omega')$ will appear later in the expression for $\doublehat{p}_{1}(\omega, \omega')$.
\end{remark}
\begin{remark}
     In the remainder of the article, we will write  $s_{3}(\omega), s_{3}(\omega'), p_{1}(\omega), s_{2}(\omega, \omega'), \hat{p}_{1}(\omega), \hat{p}_{1}(\omega^{\prime})$, $\doublehat{p}_{1}(\omega, \omega^{\prime})$ as 
     $s_{3},s_{3}^{\prime}, p_{1}, s_{2}, \hat{p}_{1}, \hat{p}_{1}^{\prime}, \doublehat{p}_{1}$, 
     respectively and omit the arguments involving $\omega,\omega^{\prime}$ for notational brevity.

\end{remark}
\end{remark}
\begin{proposition}[\bf Gauge potentials from null concordances]:
Given null concordances \eqref{NullConcordances} the following formulas define the gauge potentials for the flux densities \eqref{TheFluxDensities}
\vspace{-2mm} 
\begin{equation}
  \label{FluxDensitiesFromNullConcordances}
  \renewcommand{\arraystretch}{1.7}		
		\begin{array}{rcl}
			2 \mathrm{CS}(\omega)
              &:=& 
            s_{3}=\int_{[0,1]}\hat{p}_{1}
            \\
			C_{3}
              &:=&
            \int_{[0,1]} \hat{G}_{4}+\tfrac{1}{4} \int_{[0,1]} \hat{p}_{1}=\int_{[0,1]} \hat{\tilde{G}}_{4}
            \\
			C_{6}
              &:=&
            \int_{[0,1]} (\hat{G}_{7}-\tfrac{1}{2}\left(\int_{[0,-]}\hat{\tilde{G}}_{4}\right)\left(\hat{\tilde{G}}_{4}-\tfrac{1}{2} \hat{p}_{1}\right))
            \\
			B_{2} &:=& \int_{[0,1]} \hat{H}_{3}
            \,.
		\end{array}
\end{equation}
\end{proposition}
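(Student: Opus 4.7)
The plan is to verify each of the four Bianchi-type identities in (\ref{TraditionalGaugePotentials}) by applying fiber integration over the interval $[0,1]$, i.e., the homotopy formula
\[
  d \int_{[0,1]} \hat\alpha
  \;=\;
  \iota_1^\ast \hat\alpha - \iota_0^\ast \hat\alpha
  \;-\; \int_{[0,1]} d \hat\alpha\,,
\]
combined with the boundary conditions of Definition~\ref{NullConcordancesOfFluxDensities} (\(\iota_0^\ast\) vanishes, \(\iota_1^\ast\) gives the flux) and the Bianchi identities of (\ref{TheFluxDensities}) (which control \(d\hat\alpha\) in the bulk). The closed cases are quick: since \(d\hat p_1 = 0\) and \(d\hat{\tilde G}_4 = 0\), the fiber-integration formula immediately yields \(d\,s_3 = p_1(\omega)\) and \(d C_3 = \tilde G_4\). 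For \(B_2 = \int_{[0,1]}\hat H_3\), one substitutes \(d\hat H_3 = \hat{\tilde G}_4 - \tfrac12\hat p_1\) into the formula and recognizes \(\int_{[0,1]}\hat{\tilde G}_4 = C_3\) and \(\tfrac12\int_{[0,1]}\hat p_1 = \tfrac12 s_3 = \mathrm{CS}(\omega)\), reproducing \(d B_2 = H_3 - C_3 + \mathrm{CS}(\omega)\).

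The substantial part is \(C_6\), which is defined via a nested integration \(\int_{[0,-]}\hat{\tilde G}_4\). First I would establish the auxiliary identity that, viewing \(\hat C_3(x,t) := \int_{[0,t]}\hat{\tilde G}_4\) as a form on \(U_i \times [0,1]\), one has \(d\hat C_3 = \hat{\tilde G}_4\) on the cylinder. This follows by decomposing \(\hat{\tilde G}_4 = \tilde G_4(t) + dt \wedge \eta(t)\) and noting that \(d\hat{\tilde G}_4 = 0\) forces \(\partial_t \tilde G_4(t) = d_{U_i}\eta(t)\), so that \(d_{U_i}\int_0^t \eta = \tilde G_4(t)\) while \(\partial_t\int_0^t \eta = \eta(t)\). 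Together with the boundary values \(\iota_0^\ast \hat C_3 = 0\) and \(\iota_1^\ast \hat C_3 = C_3\), this is the only piece beyond the pure Stokes manipulation that is needed.

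With that in hand, I would split \(C_6\) into its two summands and apply the homotopy formula to each. The first, \(d\int_{[0,1]}\hat G_7\), produces \(G_7 - \tfrac12\int_{[0,1]}\hat{\tilde G}_4(\hat{\tilde G}_4 - \tfrac12 \hat p_1)\) via the nonlinear Bianchi identity for \(\hat G_7\). The second, \(-\tfrac12 d\int_{[0,1]}\hat C_3 \wedge(\hat{\tilde G}_4 - \tfrac12\hat p_1)\), is handled using closedness of the factor \(\hat{\tilde G}_4 - \tfrac12\hat p_1\) together with the auxiliary identity \(d\hat C_3 = \hat{\tilde G}_4\); the boundary term evaluates to \(-\tfrac12 C_3\wedge(\tilde G_4 - \tfrac12 p_1(\omega))\) and the bulk term exactly cancels the leftover integral from the first summand. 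Collecting terms yields \(dC_6 = G_7 - \tfrac12 C_3(\tilde G_4 - \tfrac12 p_1(\omega))\), as required.

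The main obstacle is bookkeeping in the \(C_6\) computation: keeping track of signs coming from the wedge with \(dt\) and ensuring that the cancellation between the \(\int_{[0,1]}\hat{\tilde G}_4(\hat{\tilde G}_4 - \tfrac12\hat p_1)\) bulk contributions from the two summands is exact. A clean way to organize this is to verify once and for all that \(\hat C_3\wedge(\hat{\tilde G}_4 - \tfrac12\hat p_1)\) has total exterior derivative \(\hat{\tilde G}_4\wedge(\hat{\tilde G}_4 - \tfrac12\hat p_1)\) on the cylinder, and then run Stokes on the product; everything else reduces to the boundary evaluations described above. The remaining identities for the Chern--Simons primitive and for \(C_3\), \(B_2\) are direct corollaries of the same fiber-integration formula.
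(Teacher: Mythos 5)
Your proposal is correct and follows essentially the same route as the paper: the fiber-integration (Stokes) formula over $[0,1]$ combined with the null-concordance boundary conditions and the Bianchi identities of \eqref{TheFluxDensities}, with the auxiliary identity $\dd\, \hat{C}_{3}=\hat{\tilde{G}}_{4}$ on the cylinder (which the paper invokes implicitly) carrying the $C_{6}$ case exactly as in the paper's computation. The only difference is that the paper's proof additionally establishes the surjectivity claim appended after the proposition by exhibiting explicit null concordances for given $(B_{2},C_{3},C_{6})$, which your proposal does not address, though the statement as quoted does not require it.
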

Moreover, this construction is surjective, in that every traditional gauge potential arises from some null-concordance this way.
\begin{proof}
First to see that the claimed flux densities satisfy their defining differential equations, we compute as follows:
\begin{equation}
\renewcommand{\arraystretch}{1.5}
    \begin{array}{ll}
   \dd(2CS(\omega)) &=\dd s_{3}
       =\int_{[0,1]}\hat{{p_{1}}}
      \\
&= \iota^{*} _{1}\hat{{p_{1}}}-\iota^{*}_{0} \hat{{p_{1}}} -\int_{[0,1]} d \hat{{p_{1}}}
      \\
 &      =p_{1}-0-0
      \\[-2pt]
 &   =p_{1}\,.
  \end{array}
\end{equation}
Moving onto $C_{3}$ we find,
\begin{equation}
 \renewcommand{\arraystretch}{1.5}
  \begin{array}{rcl}
	\dd C_{3}
      &=& \dd \int_{[0,1]}\hat{\tilde{G_{4}}}
      \\
      &=&
      \iota^{*} _{1}\hat{\tilde{G_{4}}}-\iota^{*}_{0} \hat{\tilde{G_{4}}} -\int_{[0,1]} d \hat{\tilde{G_{4}}}
      \\
      &=& \tilde{G_{4}}-0-0
      \\[-2pt]
      &=&
      \tilde{G_{4}}\,.
  \end{array}
\end{equation}
Next, we check the expression for $C_{6}$.
\begin{equation}
   \renewcommand{\arraystretch}{1.6}
	\begin{array}{rcl}
		\dd C_{6}
        &=&
        \dd\left(\int_{[0,1]} \hat{G}_{7}-\tfrac{1}{2} \int_{[0,-]} \hat{\tilde{G}}_{4}\left(\hat{\tilde{G}}_{4}-\tfrac{1}{2} \hat{p}_{1}\right)\right)
        \\
        &=& \iota^{*}_{1}\left(\hat{G}_{7}-\tfrac{1}{2} \hat{C}_{3}\left(\hat{\tilde{G}}_{4}-\tfrac{1}{2} \hat{p}_{1}\right)\right)-0-\int_{[0,1]} d\left[\hat{G}_{7}-\tfrac{1}{2} \hat{C}_{3}\left(\hat{\tilde{G}}_{4}-\tfrac{1}{2} \hat{p}_{1}\right)\right] 
        \\
		&=&
        G_{7}-\tfrac{1}{2}{C_{3}}{}\left(\tilde{G}_{4}-\tfrac{1}{2}p_{1}\right)-0-0= G_{7}-\tfrac{1}{2}{C_{3}}{}\left(\tilde{G}_{4}-\tfrac{1}{2}p_{1}\right).
	\end{array}
\end{equation}
Next, we would like to check the equation for $B_{2}$.
\begin{equation}
  \renewcommand{\arraystretch}{1.7}
	\begin{array}{rcl}
		\dd B_{2}
        &=&
        \dd \int_{[0,1]} \hat{H}_{3}=\iota^{*}_{1} \hat{H_{3}}-\iota^{*}_{0}\hat{{H}_{3}}-\int_{[0,1]} \dd \hat{H}_{3}
        \\
        &=&
        H_{3}-0-\int_{[0,1]}\left(\hat{G}_{4}-\tfrac{1}{4} \hat{p}_{1}\right)
        \\
        &=&
        H_{3}-0-\int_{[0,1]} \hat{\tilde{G}}_{4}+\tfrac{1}{2} \int_{[0,1]} \hat{p}_{1} 
        \\
		&=&
        H_{3}-C_{3}+\tfrac{1}{2} \int_{[0,1]} \hat{p}_{1}
        \\
        &=&
        H_{3}-C_{3}+CS(\omega)\,.
	\end{array}
\end{equation}
Thus, the expressions in equation \eqref{FluxDensitiesFromNullConcordances} are all consistent.

Next, to show that this construction is surjective, we claim that for given gauge potentials $(B_2, C_3, C_6)$ the following is a null concordance that recovers these gauge potentials via \eqref{FluxDensitiesFromNullConcordances}:
\vspace{-2mm} 
\begin{equation}
  \label{NullConcordancesForGivenGaugPotentials}
  \renewcommand{\arraystretch}{1.5}
	\begin{array}{rcl}
		\hat{p}_{1}
          &=&
        t{p_{1}}+\dd t \, s_{3} 
        \\
        \hat{G}_{4}
        &=& 
        t G_{4}+\dd t\, (C_{3}-\tfrac{1}{4}s_{3}), \quad \hat{p}_{1}=t {p_{1}}+\dd t\, s_{3}
        \\
        \hat{\tilde{G}}_{4}
        &=&
			t (\left(G_{4}+\tfrac{1}{4}p_{1}\right)+\dd t\, C_{3} = t \tilde{G_{4}}+ \dd t\, C_{3}
        \\ 
        \hat{G}_{7}
          &=&
         t^{2} G_{7}+2 t\, \dd t\, C_{6}+\tfrac{1}{4}t \dd t C_{3} s_{3}
         \\
        \hat{H}_{3} &=& t H_{3}+\dd t\, B_{2}\,.
		\end{array}
        \end{equation}

We need to check 

\begin{itemize}[
  leftmargin=2cm
]
\item[(1.)] that \eqref{NullConcordancesForGivenGaugPotentials} recovers the given gauge potentials when plugged into the formulas \eqref{FluxDensitiesFromNullConcordances},

\item[(2.)] that \eqref{NullConcordancesForGivenGaugPotentials} are actually null concordances for the given flux densities.
\end{itemize}

Keeping in mind that inside the $t$-integrals, only the terms having $\dd t$ give non-zero contributions, one can check that the expressions listed above reproduce the usual forms of the gauge potentials mentioned in \eqref{FluxDensitiesFromNullConcordances}, and also the usual null concordance conditions at $t=0$ and $t=1$. So the expressions are consistent.

\smallskip

Now, we go ahead to check that these expressions satisfy the usual Bianchi identities in \eqref{TheFluxDensities}.

We start with $\hat{{G}}_{4}$,
        \begin{equation}
            \begin{array}{l}

		\dd \,\hat{G}_{4}=\dd t\, G_{4}-\dd\, t \tilde{G_{4}}+\tfrac{1}{4}\dd t \,p_{1}=0 \\
	
	\end{array}
\end{equation}
Next, we want to check $\hat{p}_{1}$
        \begin{equation}
            \dd\hat{p}_{1}=\dd t\, p_{1}-\dd t\, \dd s_{3}=0 \,( ds_{3}=p_{1}).
        \end{equation}
        Next, we check $\hat{\tilde{G}}_{4}$
        \begin{equation}
		\dd \hat{\tilde{G_{4}}}=\dd t\, \tilde{G}_{4}-\dd t\, \tilde{G}_{4}=0\,.
\end{equation}
Now, moving onto $\hat{H}_{3}$, we see
\begin{align}
\dd \hat{H}_{3}&=\dd t\, H_{3}+t\left(G_{4}-\tfrac{1}{4} p_{1}\right)-\dd t\left(H_{3}-C_{3}+
\tfrac{1}{2}s_{3}\right)=t\left(G_{4}-\tfrac{1}{4} p_{1}\right)+\dd t\left(C_{3}-
\tfrac{1}{2}s_{3}\right),
\\
\hat{G}_{4}-\tfrac{1}{4} \hat{p}_{1}&=t G_{4}+\dd t\left(c_{3}-\tfrac{1}{4}s_{3}\right)-\tfrac{t}{4} p_{1}-\tfrac{\dd t}{4} s_{3}=t\left(G_{4}-\tfrac{1}{4} p_{1}\right)+\dd t\left(C_{3}-\tfrac{1}{2}s_{3}\right).
        \end{align}
        
		\text { Thus, we recover the expected Bianchi identity } 
        \begin{equation}\dd \hat{H}_{3}=\hat{G}_{4}-\tfrac{1}{4} \hat{p}_{1}\end{equation}

	Now, $$\tilde{G}_{7}:=G_{7}-\tfrac{1}{2} C_{3}\left(\tilde{G}_{4}-\tfrac{1}{2} p_{1}\right),\qquad \dd C_{6}=\tilde{G}_{7},\qquad \dd \tilde{G}_{7}=0 $$
    Also,
		\begin{equation}
		    \hat{C}_{3}=\int_{[0,t]} \hat{\tilde{G}}_{4}=\int_{[0,t]} t^{\prime}\tilde{G}_{4}+\dd t'\, C_{3}=t C_{3}\,,
            \end{equation}
            $$\iota^{*}_{0}\hat{C}_{3}=0,\iota^{*} _{1} \hat{C}_{3}=C_{3}\,,
            $$
$$d\hat{C_{3}}= \dd \,t C_{3}+ t \tilde{G_{4}}= \hat{\tilde{G}}_{4}
\,,$$
\begin{equation}
	\begin{array}{c}
		\hat{\tilde{G}}_{7}=t^{2} \tilde{G}_{7}+2 t \dd t\, C_{6} \,,\\
		C_{6}=\int_{[0,1]} \hat{\tilde{G}}_{7}\,,
        \\
        \dd C_{6}=\iota_{1}^{*} \hat{\tilde{G}}_{7}-\iota_{0}^{*} \hat{\tilde{G}}_{7}-\int_{[0,1]} \dd \hat{\tilde{G}}_{7}=\left[G_{7}-\tfrac{1}{2} C_{3}\left(\tilde{G}_{4}-\tfrac{1}{2} p_{1} \right)\right]-0-0= \tilde{G}_{7}\,, \\
	\end{array}
\end{equation}

\begin{equation}
	\begin{aligned}
		\dd \hat{\tilde{G}}_{7} & =2 t \dd t\, \tilde{G}_{7}+t^{2} \dd \tilde{G}_{7}-2 t \dd t\, \dd C_{6} \\
		& =2 t \dd t \widetilde{G}_{7}-2 t \dd t \tilde{G}_{7}+0=0\,.
	\end{aligned}
\end{equation}
Now, using the above-listed expressions, we want to get to $\hat{G}_{7}$
\begin{equation}
\renewcommand{\arraystretch}{1.5}
	\begin{array}{l}
		\hat{\tilde{G}}_{7}=t^{2} \tilde{G}_{7}+2 t \dd t\, C_{6} \\
		\hat{G}_{7}-\tfrac{1}{2} \hat{C}_{3}\left(\hat{G}_{4}-\tfrac{1}{4} \hat{p}_{1}\right)=t^{2}\left(G_{7}-\tfrac{1}{2} C_{3}\big(\tilde{G}_{4}-\tfrac{1}{2} p_{1}\big)\right)+2 t \dd t\, C_{6} \\
		\hat{G}_{7}-\tfrac{1}{2} t C_{3}\left[t\left(G_{4}-\tfrac{1}{4} p_{1}\right)+\dd t\left(C_{3}-\tfrac{1}{2} s_{3}\right)\right]=t^{2} G_{7}-\tfrac{t^{2}}{2} C_{3}\left(G_{4}-\tfrac{1}{4} p_{1}\right)+2 t \dd t\, C_{6} \\
		\hat{G}_{7}-\tfrac{t^{2}}{2} C_{3}\left(G_{4}-\tfrac{1}{4} p_{1}\right)-\tfrac{1}{4}t \dd t C_{3} s_{3}=t^{2} G_{7}-\tfrac{t^{2}}{2} C_{3}\left(G_{4}-\tfrac{1}{4} p_{1}\right)+2 t \dd t\, C_{6} \\
		\hat{G}_{7}=t^{2} G_{7}+2 t \dd t\, C_{6}+\tfrac{1}{4} t \dd t\, C_{3} s_{3}
	\end{array}
\end{equation}
as advertised above.

Now, we check 
\begin{equation}
\renewcommand{\arraystretch}{1.4}
\begin{array}{ll}
    \dd\hat{G}_{7}&= \tfrac{1}{2}t^{2} (G_{4}+\tfrac{1}{4}p_{1})(G_{4}-\tfrac{1}{4}p_{1})+ t \dd \,t C_{3}(G_{4}-\tfrac{1}{4}p_{1}) + \tfrac{1}{4}t\dd t\, C_{3} p_{1}- \tfrac{1}{4}t \dd t\,(G_{4}+\tfrac{1}{4}p_{1}) s_{3} 
    \\
   & = \tfrac{1}{2}(\hat{G}_{4}+\tfrac{1}{4}\hat{p}_{1})(\hat{G}_{4}-\tfrac{1}{4}\hat{p}_{1})
    \end{array}
\end{equation} 
as expected.
\end{proof}

This completes the discussion of null concordances.

\medskip 
Let us turn to the concordances of concordances now.
  We claim below the forms of the concordances of concordances between the null concordances have surjective maps to the usual forms of differential forms, which encode gauge transformations of the gauge potentials.
\begin{proposition} [\bf Null-concordances of concordances as producing gauge equivalences]
\begin{equation}
\label{Null-concordances of concordances as producing gauge equivalences}
\renewcommand{\arraystretch}{1.5}
\begin{array}{l}
	C_{2}=\int_{s \in{[0,1]}} \int_{t \in[0,1)} \hat{\hat{\tilde{G}}}_{4}\\ B_{1}=\int_{s \in{[0,1]}} \int_{t \in{[0,1]}} {\hat{\hat{H}}}_{3}\\C_{5}=\int_{s \in{[0,1]}}\int_{ t \in{[0,1]}} (\hat{\hat{{G}}}_{7}-\tfrac{1}{2}\hat{\hat{C}}_{3}(\hat{\hat{\tilde{G}}}_{4}-\tfrac{1}{2}\hat{\hat{p}}_{1}))-\tfrac{1}{2} C_{2} C_{3}\\
    s_{2}=\int_{s \in{[0,1]}}\int_{ t \in{[0,1]}} \hat{\hat{{p}}}_{1}\,.
    \end{array}
\end{equation}
	
    \end{proposition}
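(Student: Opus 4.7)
The plan is to mirror the null-concordance proof on the square $U_{i}\times[0,1]_{s}\times[0,1]_{t}$, now applying the fibre-integration Stokes formula $\mathrm{d}\int_{[0,1]}\alpha = \iota_{1}^{*}\alpha - \iota_{0}^{*}\alpha - \int_{[0,1]}\mathrm{d}\alpha$ iteratively in both the $s$- and $t$-directions. Together with the Bianchi identities \eqref{TheFluxDensities} for the components of the concordance-of-concordances and the boundary conditions of Definition \ref{NullConcordancesOfFluxDensities} at the four edges of the square, each of the four formulas reduces to bookkeeping. Iterating gives
\begin{equation*}
\mathrm{d}\int_{s}\!\!\int_{t}\alpha \;=\; \iota_{s=1}^{*}\!\!\int_{t}\alpha \,-\, \iota_{s=0}^{*}\!\!\int_{t}\alpha \,-\, \int_{s}\iota_{t=1}^{*}\alpha \,+\, \int_{s}\iota_{t=0}^{*}\alpha \,-\, \int_{s}\!\!\int_{t}\mathrm{d}\alpha\,,
\end{equation*}
which I would take as the workhorse identity.

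For $C_{2}$ and $s_{2}$, the relevant components $\hat{\hat{\tilde{G}}}_{4}$ and $\hat{\hat{p}}_{1}$ are closed, so only boundary terms remain. The $s$-boundaries immediately yield $C_{3}'-C_{3}$ and $s_{3}'-s_{3}$ by the definition of the primed and unprimed potentials; the $t=1$ boundaries are pullbacks from $U_{i}$ of $\tilde{G}_{4}$ and $p_{1}$, carry no $\mathrm{d}s$-component, and are annihilated by $\int_{s}$; the $t=0$ boundaries vanish. For $s_{2}$ this reproduces $2(\mathrm{CS}(\omega')-\mathrm{CS}(\omega))$, matching the transgression formula recalled in the earlier remark. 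For $B_{1}$, the same scheme together with the nontrivial Bianchi identity $\mathrm{d}\hat{\hat{H}}_{3}=\hat{\hat{\tilde{G}}}_{4}-\tfrac{1}{2}\hat{\hat{p}}_{1}$ gives $\mathrm{d}B_{1}=B_{2}'-B_{2}+C_{2}-\tfrac{1}{2}s_{2}$, which is exactly the third line of \eqref{GaugeTransformations} upon substituting $\tfrac{1}{2}s_{2}=\int_{t}\mathrm{Tr}((\omega'-\omega)\wedge\omega_{t})\mathrm{d}t$.

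The main obstacle is $C_{5}$. I would introduce $\hat{\hat{C}}_{3}:=\int_{[0,t]}\hat{\hat{\tilde{G}}}_{4}$, so $\mathrm{d}\hat{\hat{C}}_{3}=\hat{\hat{\tilde{G}}}_{4}$, and verify directly that the combination $\hat{\hat{\tilde{G}}}_{7}:=\hat{\hat{G}}_{7}-\tfrac{1}{2}\hat{\hat{C}}_{3}(\hat{\hat{\tilde{G}}}_{4}-\tfrac{1}{2}\hat{\hat{p}}_{1})$ is closed, using the Bianchi identity for $\hat{\hat{G}}_{7}$ together with closure of $\hat{\hat{\tilde{G}}}_{4}$ and $\hat{\hat{p}}_{1}$. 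Double fibre integration then yields $C_{6}'-C_{6}$ from the $s$-boundaries, while the $t=1$ boundary contributes a non-vanishing $+\tfrac{1}{2}C_{2}(\tilde{G}_{4}-\tfrac{1}{2}p_{1})$: although $\iota_{t=1}^{*}$ of $G_{7}$, $\tilde{G}_{4}$, $p_{1}$ carries no $\mathrm{d}s$, the factor $\iota_{t=1}^{*}\hat{\hat{C}}_{3}=\int_{t}\hat{\hat{\tilde{G}}}_{4}$ still has a $\mathrm{d}s$-component whose $\int_{s}$ equals $C_{2}$. The counterterm contributes $\mathrm{d}(-\tfrac{1}{2}C_{2}C_{3})=-\tfrac{1}{2}C_{3}'C_{3}-\tfrac{1}{2}C_{2}\tilde{G}_{4}$ (using $\mathrm{d}C_{2}=C_{3}'-C_{3}$, $\mathrm{d}C_{3}=\tilde{G}_{4}$, and $C_{3}\wedge C_{3}=0$); the $C_{2}\tilde{G}_{4}$ pieces cancel and I recover $\mathrm{d}C_{5}=C_{6}'-C_{6}-\tfrac{1}{2}C_{3}'C_{3}-\tfrac{1}{4}C_{2}p_{1}$ as needed.

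The two places where I expect pitfalls are (i) verifying closure of $\hat{\hat{\tilde{G}}}_{7}$, which relies on the exact coefficient $\tfrac{1}{2}$ in the nonlinear Bianchi identity for $\hat{\hat{G}}_{7}$, and (ii) tracking the signs from the reordering of $\mathrm{d}s$, $\mathrm{d}t$ and the odd-degree factor $\hat{\hat{C}}_{3}$ in the iterated fibre integrations. The explicit counterterm $-\tfrac{1}{2}C_{2}C_{3}$ in the definition of $C_{5}$ is precisely what is needed so that the $C_{2}\tilde{G}_{4}$ residue produced by the $t=1$ boundary converts cleanly into the $\tfrac{1}{2}C_{3}'C_{3}$-term demanded by \eqref{GaugeTransformations}; once those signs are pinned down, the remaining calculations are as systematic as in the previous proposition.
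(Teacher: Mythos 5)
Your proposal follows the paper's proof essentially step for step: the same iterated fibre-integration Stokes argument with the boundary conditions and Bianchi identities of the concordance-of-concordances, and the same treatment of $C_{5}$ via the closed combination $\hat{\hat{\tilde{G}}}_{7}=\hat{\hat{G}}_{7}-\tfrac{1}{2}\hat{\hat{C}}_{3}\big(\hat{\hat{\tilde{G}}}_{4}-\tfrac{1}{2}\hat{\hat{p}}_{1}\big)$, with the counterterm $-\tfrac{1}{2}C_{2}C_{3}$ cancelling the $C_{2}\tilde{G}_{4}$ residue produced by the $\dd s$-component of $\iota_{t=1}^{*}\hat{\hat{C}}_{3}$. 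One slip: in your workhorse identity the final term should be $+\int_{s}\int_{t}\dd\alpha$, not $-\int_{s}\int_{t}\dd\alpha$, since iterating $\dd\int\alpha=\iota_{1}^{*}\alpha-\iota_{0}^{*}\alpha-\int\dd\alpha$ gives $-\int_{s}\big(\iota_{t=1}^{*}\alpha-\iota_{t=0}^{*}\alpha-\int_{t}\dd\alpha\big)$; this is harmless for $C_{2}$, $s_{2}$, $C_{5}$ where the integrand is closed, but it is precisely the term that yields $+C_{2}-\tfrac{1}{2}s_{2}$ in the $B_{1}$ computation, so taken literally your identity would give $-C_{2}+\tfrac{1}{2}s_{2}$ there. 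Since you nonetheless state the correct $\dd B_{1}$, this reads as a typo rather than a conceptual gap.
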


\begin{proof}
    Let's check the consistency of these relations to satisfy the expected Bianchi identities as mentioned in \eqref{GaugeTransformations}.

  Firstly, we check the $C_{2}$ relation.
  \begin{equation}
  \renewcommand{\arraystretch}{2.1}
  \begin{array}{ll}
   \dd C_{2}&=\dd \int_{s\in{[0,1]}} \int_{t\in{[0.1]}}{\hat{\hat{\tilde{G}}}}_{4}=\iota^{*}_{s=1}\int_{t \in{[0,1]}} {\hat{\hat{\tilde{G}}}_{4}}-\iota^{*}_{s=0} \int_{t \in{[0,1]}}{\hat{\hat{\tilde{G}}}_{4}}-\int_{s \in{[0,1]}} \dd \int_{t \in{[0,1]}} {\hat{\hat{\tilde{G}}}}_{4}
\\[2pt]
	&=\int_{t\in[0,1]} \iota^{*}_{s=1} \hat{\hat{\tilde{G}}}_{4}-\int_{t \in[0,1]} \iota^{*}_{s=0} {\hat{\hat{\tilde{G}}}}_{4}-\int_{s \in[0,1]}\left(\iota^{*}_{t=1} \hat{\hat{\tilde{G}}}_{4}-\iota^{*}_{t=0} \hat{\hat{\tilde{G}}}_{4}\right)+\int_{s \in[0,1]}\int_{ t\in{[0,1]}} \dd {\hat{\hat{\tilde{G}}}}_{4}
\\
&=\int_{t\in[0,1]}\hat{\tilde{G}}^{\prime}_{4}- \int_{t\in[0,1]}\hat{\tilde{G}}_{4}-0-0
\\
&	=C^{\prime}_{3}-C_{3}.
\end{array}
\end{equation}
Next, moving onto the $B_{1}$ equation, we see
\begin{equation}
  \renewcommand{\arraystretch} {2} 
\begin{array}{ll}
  \dd B_{1}=\dd \int_{s\in{[0,1]}} \int_{t\in{[0,1]}}{\hat{\hat{H}}}_{3}=\iota^{*}_{s=1}\int_{t \in{[0,1]}} {\hat{\hat{H}}_{3}}-\iota^{*}_{s=0} \int_{t \in{[0,1]}}{\hat{\hat{H}}_{3}}-\int_{s \in{[0,1]}} \dd \int_{t \in{[0,1]}} {\hat{\hat{H}}}_{3}\\

	=\int_{t\in[0,1]} \iota^{*}_{s=1} \hat{\hat{H}}_{3}-\int_{t \in[0,1]} \iota^{*}_{s=0} {\hat{\hat{H}}}_{3}-\int_{s \in[0,1]}\left(\iota^{*}_{t=1} \hat{\hat{H}}_{3}-\iota^{*}_{t=0} \hat{\hat{H}}_{3}\right)+\int_{s \in[0,1]}\int_{ t\in{[0,1]}} \dd {\hat{\hat{H}}}_{3}\\
    =\int_{t\in[0,1]}\hat{H^{\prime}_{3}}- \int_{t\in[0,1]}\hat{H}_{3}-0+\int_{s\in[0,1]}\int_{t\in[0,1]}\hat{\hat{\tilde{G}}}_{4}-\tfrac{1}{2}\int_{s\in[0,1]}\int_{t\in[0,1]}\hat{\hat{p}}_{1}\\
    
	=B_{2}{ }^{\prime}-B_{2}+C_{2}-\tfrac{1}{2} \int_{s \in[0, 1]} \int_{t \in[0,1]}{\hat{\hat{p}}_{1}}\\
    =B_{2}{ }^{\prime}-B_{2}+C_{2}-\tfrac{1}{2}s_{2}\,.
    \end{array}
\end{equation}

Next, we want to check the $C_{5}$ relation.
\begin{align}
	\hat{\hat{\tilde{G}}}_{7} & :={\hat{\hat{G}}}_{7}-\tfrac{1}{2} \hat{\hat{{C}}}_{3}\Big(\hat{\hat{\tilde{G}}}_{4}-\tfrac{1}{2} {\hat{\hat{p}}}_{1}\Big)
\\
	C_{5} &:= \int_{s \in[0, 1]} \int_{t \in[0,1]}
    \left({\hat{\hat{G}}}_{7}-\tfrac{1}{2}{\hat{\hat{C}}}_{3}\Big({\hat{\hat{\tilde{G}}}}_{4}-\tfrac{1}{2} {\hat{\hat{p}}}_{1}\Big)\right)-\tfrac{1}{2} C_{2} C_{3}\\
	&=\int_{s \in[0,1)} \int_{t \in[0,1]} \hat{\hat{\tilde{G}}}_{7}-\tfrac{1}{2} C_{2} C_{3}\;.
\end{align}

Now,	\begin{equation}
    \renewcommand{\arraystretch}{2}
    \begin{array}{l}
\dd C_{5}=\dd \int_{s \in[0,1]} \int_{t \in[0,1]} \hat{\hat{\tilde{G}}}_{7}-\tfrac{1}{2} \dd C_{2} C_{3}-\tfrac{1}{2} C_{2} \dd C_{3}\\
=\iota^{*}_{s=1}\int_{t \in{[0,1]}} {\hat{\hat{\tilde{G}}}_{7}}-\iota^{*}_{s=0} \int_{t \in{[0,1]}}{\hat{\hat{\tilde{G}}}_{7}}-\int_{s \in{[0,1]}} \dd \left(\int_{t \in{[0,1]}} {\hat{\hat{G}}}_{7}-\tfrac{1}{2}{\hat{\hat{C}}}_{3}\Big({\hat{\hat{\tilde{G}}}}_{4}-\tfrac{1}{2} {\hat{\hat{p}}}_{1}\Big)\right)
-\tfrac{1}{2}(C^{\prime}_{3}-C_{3})C_{3}-\frac{1}{2}C_{2} \tilde{G_{4}}\\

	=\int_{t\in[0,1]} \hat{\tilde{G}}^{\prime}_{7}-\int_{t \in[0,1]} {\hat{\tilde{G}}}_{7}+\tfrac{1}{2}\Big(\int_{s \in[0,1]}\int_{t\in[0,1]}\hat{\hat{\tilde{G}}}_{4}\Big)\big(\tilde{G}_{4}-\tfrac{1}{2}p_{1}\big)-\tfrac{1}{2} C_{3}{ }^{\prime} C_{3}-\tfrac{1}{2} C_{2} \tilde{G}_{4}
\\
		=C_{6}{ }^{\prime}-C_{6}+\tfrac{1}{2} C_{2}\left(\tilde{G}_{4}-\tfrac{1}{2} p_{1}\right)-\tfrac{1}{2} C_{3}{ }^{\prime} C_{3}-\tfrac{1}{2} C_{2} \tilde{G}_{4} \\
		=C_{6}{ }^{\prime}-C_{6}-\tfrac{1}{2} C_{3}{ }^{\prime} C_{3}-\tfrac{1}{4} C_{2} p_{1}\;.
	\end{array}
\end{equation}
Finally, let's check the $s_{2}$ equation.
\begin{equation}
\renewcommand{\arraystretch}{1.8}
    \begin{array}{l}
       \dd{s}_{2}=\dd \int_{s\in{[0,1]}} \int_{t\in{[0.1]}}{\hat{\hat{{p}}}}_{1}=\iota^{*}_{s=1}\int_{t \in{[0,1]}} {\hat{\hat{{p}}}_{1}}-\iota^{*}_{s=0} \int_{t \in{[0,1]}}{\hat{\hat{{p}}}_{1}}-\int_{s \in{[0,1]}} \dd \int_{t \in{[0,1]}} {\hat{\hat{{p}}}}_{1}\\

	=\int_{t\in[0,1]} \iota^{*}_{s=1} \hat{\hat{{p}}}_{1}-\int_{t \in[0,1]} \iota^{*}_{s=0} {\hat{\hat{{p}}}}_{1}-\int_{s \in[0,1]}\left(\iota^{*}_{t=1} \hat{\hat{{p}}}_{1}-\iota^{*}_{t=0} \hat{\hat{{p}}}_{1}\right)+\int_{s \in[0,1]}\int_{ t\in{[0,1]}} \dd {\hat{\hat{p}}}_{1}\\
    =\int_{t\in[0,1]}\hat{{p}}^{\prime}_{1}- \int_{t\in[0,1]}\hat{{p}}_{1}-0-0\\=s_{3}^{\prime}- s_{3}\,.
 
    \end{array}
\end{equation}

\vspace{-7mm} 
\end{proof}

Let us show the surjections explicitly and check their consistency.

\begin{proposition} [\bf Surjections]
    The surjections are given by setting:
\begin{equation}
\renewcommand{\arraystretch}{1.6}
    \begin{array}{l}
        \hat{\hat{p}}_{1}=t p_{1}+\dd t\, s_{3}+s \dd t\left(s_{3}^{\prime}-s_{3}\right)-\dd s\, \dd t\, s_{2}\\
        {\hat{\hat{\tilde{G}}}}_{4}=t{\tilde{G}_{4}}+\dd t \,C_{3}+s \dd t\left(C_{3}^{\prime}-C_{3}\right)-\dd s\, \dd t\, C_{2}\\
        {\hat{\hat{G}}}_{7}
		= t^{2} G_{7}+2 t\dd t\, C_{6}+ \tfrac{1}{4} t\dd t\, C_{3}s_{3}+ 2 s t \dd t\left(C_{6}^{\prime}-C_{6}\right)-2 t\dd s\, \dd t\left(C_{5}+\tfrac{1}{2} C_{2} C_{3}\right) \\
		\qquad -  \tfrac{1}{4}s\, t\, \dd t C_{3}\left(s_{3}-s_{3}^{\prime}\right)+\tfrac{1}{4}s\, t\, \dd t\left(C_{3}^{\prime}-C_{3}\right) s_{3}-\tfrac{1}{4} s^{2}\, t\,\dd t\left(C_{3}^{\prime}-C_{3}\right)\left(s_{3}-s_{3}^{\prime}\right) \\
		 \qquad +\tfrac{1}{4}t\, \dd s\, \dd t \left(C_{3} s_{2}-C_{2} s_{3}\right)+\tfrac{1}{4} st\, \dd s\, \dd t\left(C_{3}^{\prime}-C_{3}\right) s_{2}-\tfrac{1}{4}s t\, \dd s\, \dd tC_{2}\left(s_{3}^{\prime}-s_{3}\right)\\
         \hat{\hat{H}}_{3}= t H_{3}+\dd t\, B_{2}+s \dd t\left(B_{2}^{\prime}-B_{2}\right)-\dd s\, \dd t\, B_{1}\,.
    \end{array}
\end{equation}
\end{proposition}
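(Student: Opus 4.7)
The plan is to verify three properties of the displayed forms: (a) they satisfy the four face conditions of a concordance-of-concordances (Definition~\ref{NullConcordancesOfFluxDensities}); (b) they are closed in $\mathfrak{l}_{S^{4}\sslash \mathrm{Sp}(2)} S^{7}\sslash \mathrm{Sp}(2)$, i.e.\ satisfy the Bianchi identities of \eqref{TheFluxDensities} lifted to the square $[0,1]_{s}\times[0,1]_{t}$; and (c) when inserted into the integral formulas of the preceding proposition, they return exactly the gauge-transformation forms $C_{2}, B_{1}, C_{5}, s_{2}$. Together these three items justify the word \emph{surjection}: any collection of gauge-transformation data of the displayed type arises from a concordance-of-concordances of the stated form.

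For (a), the $t=0$ face vanishes because every term carries a factor of $t$ or $\mathrm{d}t$; the $t=1$ face drops the $\mathrm{d}t$-bearing terms and yields the pullbacks $\mathrm{pr}_{U_{i}}^{*}(G_{4}, G_{7}, H_{3}, p_{1})$; and the $s=0$, $s=1$ faces drop the $\mathrm{d}s$-bearing terms, reducing to the null-concordance ansatz \eqref{NullConcordancesForGivenGaugPotentials} for $(B_{2}, C_{3}, C_{6}, s_{3})$ and for its primed version, respectively. These checks are essentially by inspection. For (c), the key observation is that $\int_{s}\int_{t}$ picks out only the $\mathrm{d}s\wedge\mathrm{d}t$ component of the integrand; reading these coefficients off the ansatz immediately identifies $C_{2}, B_{1}$ and $s_{2}$, up to the fixed orientation convention on $[0,1]^{2}$. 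The $C_{5}$ case is more involved: one first computes $\hat{\hat{C}}_{3} := \int_{[0,t]} \hat{\hat{\tilde{G}}}_{4}$ and then extracts the $\mathrm{d}s\wedge\mathrm{d}t$ component of $\hat{\hat{G}}_{7} - \tfrac{1}{2}\hat{\hat{C}}_{3}\bigl(\hat{\hat{\tilde{G}}}_{4} - \tfrac{1}{2}\hat{\hat{p}}_{1}\bigr)$, so that the counterterm $-\tfrac{1}{2}C_{2}C_{3}$ in the definition of $C_{5}$ is precisely the leftover needed to land on the stated gauge-transformation form.

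The main obstacle is (b) for $\hat{\hat{G}}_{7}$, which must satisfy the non-linear Bianchi identity $\mathrm{d}\hat{\hat{G}}_{7} = \tfrac{1}{2}\hat{\hat{\tilde{G}}}_{4}\bigl(\hat{\hat{\tilde{G}}}_{4} - \tfrac{1}{2}\hat{\hat{p}}_{1}\bigr)$. Expanding the right-hand side using the ansatz produces bilinear contributions organized by the monomials $t^{2}$, $t\,\mathrm{d}t$, $s\,t\,\mathrm{d}t$, $s^{2}\,t\,\mathrm{d}t$, $t\,\mathrm{d}s\,\mathrm{d}t$ and $s\,t\,\mathrm{d}s\,\mathrm{d}t$, and each class must cancel separately against the corresponding terms of $\mathrm{d}\hat{\hat{G}}_{7}$. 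The cross-terms in the ansatz involving $C_{3} s_{3}$, $(C_{3}'-C_{3}) s_{3}$, $C_{3}(s_{3}-s_{3}')$, $(C_{3}'-C_{3})(s_{3}-s_{3}')$, $C_{3} s_{2}$, $C_{2} s_{3}$, $(C_{3}'-C_{3}) s_{2}$ and $C_{2}(s_{3}'-s_{3})$ are engineered precisely so that, upon invoking the on-shell identities $\mathrm{d}C_{6} = G_{7} - \tfrac{1}{2} C_{3}(\tilde{G}_{4} - \tfrac{1}{2}p_{1})$, $\mathrm{d}C_{5} = C_{6}' - C_{6} - \tfrac{1}{2}C_{3}' C_{3} - \tfrac{1}{4}C_{2}\,p_{1}$, $\mathrm{d}B_{1} = B_{2}' - B_{2} + C_{2} - \tfrac{1}{2}s_{2}$, $\mathrm{d}s_{2} = s_{3}' - s_{3}$ and $\mathrm{d}C_{2} = C_{3}'-C_{3}$, every monomial class balances. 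This is a long but mechanical bookkeeping exercise paralleling the single-null-concordance computation in the proof of the previous proposition, now enriched by the $s$-coordinate interpolating the $\omega$-sector into the $\omega'$-sector.
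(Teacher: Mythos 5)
Your proposal is correct and follows essentially the same route as the paper: check the face conditions at $s,t\in\{0,1\}$, verify the lifted Bianchi identities (with the $\hat{\hat{G}}_{7}$ case handled through $\hat{\hat{C}}_{3}=\int_{[0,t]}\hat{\hat{\tilde{G}}}_{4}$ and the on-shell identities for $C_{2}, C_{5}, C_{6}, B_{1}, s_{2}$), and confirm that the double fiber integrals pick out the $\dd s\wedge \dd t$ components so as to return $C_{2}, B_{1}, C_{5}, s_{2}$, with the $-\tfrac{1}{2}C_{2}C_{3}$ counterterm playing exactly the role you describe. The only cosmetic difference is that the paper organizes the degree-7 bookkeeping around the auxiliary form $\hat{\hat{\tilde{G}}}_{7}$ before performing the same direct check of $\dd\hat{\hat{G}}_{7}=\tfrac{1}{2}\hat{\hat{\tilde{G}}}_{4}\big(\hat{\hat{\tilde{G}}}_{4}-\tfrac{1}{2}\hat{\hat{p}}_{1}\big)$ that you outline by monomial classes.
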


\begin{proof}
We can immediately see that the expressions mentioned above are consistent when plugged in \eqref{Null-concordances of concordances as producing gauge equivalences} and the usual null-concordance of concordance relations at $s=0,1; t=0,1$.

Now, we want to check that they satisfy the usual Bianchi identities.

 We start with
\begin{equation}
	\hat{\hat{p}}_{1}=t p_{1}+\dd t s_{3}+s \dd t\left(s_{3}^{\prime}-s_{3}\right)-\dd s \dd t s_{2}
\end{equation}
and using \begin{equation}\dd s_{2}=s_{3}^{\prime}-s_{3}
\end{equation} we find
\begin{equation}
	\dd \hat{\hat{p}}_{1}
	=\dd t\, {p_{1}}-\dd t\, p_{1}+\dd s\, \dd t\left(s_{3}{ }^{\prime}-s_{3}\right)-s d t .0-\dd s\, \dd t\ \dd s_{2}=0.
    \end{equation}

Next, we move onto $\hat{\hat{G}}_{4}$ and $\hat{\hat{\tilde{G}}}_{4}$. 

\begin{equation}
	{\hat{\hat{G}}}_{4}=t{G_{4}}+\dd t\left(C_{3}-\tfrac{1}{4}s_{3}\right)+s \dd t\left(C_{3}^{\prime}-\tfrac{1}{4}s_{3}^{\prime}-C_{3}+\tfrac{1}{4}s_{3}\right)-\dd s\, \dd t\,\left(C_{2}-\tfrac{1}{4} s_{2}\right)
\end{equation}

\begin{equation}
	\begin{aligned}
	\hat{\hat{\tilde{G}}}_{4} &=   t\left(G_{4}+\tfrac{1}{4} p_{1}\right)+\dd t\, C_{3}+s \dd t\left(C_{3}^{\prime}-C_{3}\right)-\dd s\, \dd t\, C_{2} \\
		& =t \tilde{G}_{4}+\dd t\, C_{3}+s \dd t\left(C_{3}^{\prime}-C_{3}\right)-\dd s\, \dd t\, C_{2}\,.
	\end{aligned}
\end{equation}
Now, we check
\begin{equation}
	\mathrm{d}\hat{\hat{\tilde{G}}}_{4}=\dd t\, \tilde{G_{4}}-\dd t\, \tilde{G_{4}}-s \dd t\big(\tilde{G}_{4}-\tilde{G}_{4}\big)+\dd s\,\d\,d t\left(C_{3}{ }^{\prime}-C_{3}\right)-\dd s\, \dd t\,\left(C_{3} ^{\prime}-C_{3}\right)=0\,,
\end{equation}
and also the same for $\hat{\hat{{G}}}_{4}$
\begin{align}
\hspace{-8mm} 
	\dd \hat{\hat{G}}_{4}&=\dd t\, G_{4}-\dd t\, G_{4}+s \dd t\left(C_{3}^{\prime}-\tfrac{1}{4}s_{3}^{\prime}-C_{3}+\tfrac{1}{4}s_{3}\right)-s \dd t\left(G_{4}-G_{4}\right)
    -\dd s\, \dd t\,\left(C_{3}^{\prime}-C_{3}-\tfrac{1}{4}{s_{3}^{\prime}}+\tfrac{1}{4}{s_{3}}\right)
    \nonumber
    \\
    &=0.
\end{align}

Next, moving onto $\hat{\hat{H}}_{3}$ we see
\begin{equation}
	\hat{\hat{H}}_{3}=t H_{3}+\dd t\, B_{2}+s \dd t\left(B_{2}^{\prime}-B_{2}\right)-\dd s\, \d\,d t B_{1}
\end{equation}
and we check the corresponding Bianchi identity
\begin{equation*}
\begin{array}{l}
\dd \hat{\hat{H}}_{3}=\dd t\, H_{3}-\dd t\,\left(H_{3}-C_{3}+\tfrac{1}{2}s_{3}\right)+t\left(G_{4}-\tfrac{1}{4} p_{1}\right)+\dd s\, \dd t\,\left(B_{2}^{\prime}-B_{2}\right)\\[4pt]
   \qquad \quad   -s \dd t\left(\tfrac{1}{2}{s_{3}^{\prime}}-C_{3}^{\prime}-\tfrac{1}{2}s_{3}+C_{3}\right)-\dd s\, \dd t\, \dd B_{1}
    \end{array}
\end{equation*}
so
    \renewcommand{\arraystretch}{2}
	\begin{align}
    \dd \hat{\hat{H}}_{3}&=t\left(G_{4}-\tfrac{1}{4} p_{1}\right)+\dd t\left(C_{3}-\tfrac{1}{2}{s_{3}}\right)+s \dd t\left(C_{3}^{\prime}-C_{3}-\tfrac{1}{2}s_{3}^{\prime}+\tfrac{1}{2}s_{3}\right)-\dd s\, \dd t\,\left(C_{2}-\tfrac{1}{2}s_{2}\right)
\nonumber 
\\
	&=\hat{\hat{{G}}}_{4}-\tfrac{1}{4} \hat{\hat{p}}_{1} =\hat{\hat{\tilde{G}}}_{4}-\tfrac{1}{2} {\hat{\hat{p}}}_{1}
\end{align}

as expected.

Finally, we want to check the case for $\hat{\hat{G}}_{7}$.
\begin{equation}
	\begin{array}{r}
		\hat{\hat{\tilde{{G}}}}{ }_{7}:=t^{2} \tilde{G}_{7}+2 t \dd t C_{6}+2 s t\, \dd t\left(C_{6}^{\prime}-C_{6}\right)-2 \dd s\, t\,\dd t\left(C_{5}+\tfrac{1}{2} C_{2} C_{3}\right)+s t \dd t C_{2}\left(\tilde{G}_{4}-\tfrac{1}{2}p_{1}\right) \\
		+\dd\left(-\tfrac{s t^{2}}{2} C_{2}\left(\tilde{G}_{4}-\frac{1}{2}p_{1}\right)\right).
	\end{array}
\end{equation}

We check

       \begin{equation}
\renewcommand{\arraystretch}{1.5}
	\begin{array}{ll}
		\dd \hat{\hat{\tilde{G}}}_{7}& \!\!\!=2 t \dd t \tilde{G}_{7}+t^{2} \dd \tilde{G}_{7}-2 t \dd t \tilde{G}_{7}+2 \dd s\, t\, \dd t\left(C_{6}^{\prime}-C_{6}\right)-2 s t \dd t\left\{\left(G_{7}-\tfrac{1}{2} C_{3}^{\prime}\left(\tilde{G}_{4}-\tfrac{1}{2} p_{1}\right)\!\right)\!\right. \\
		&\!\!\!\! \left.-\left(G_{7}-\tfrac{1}{2} C_{3}\left(\tilde{G}_{4}-\tfrac{1}{2} p_{1}\right)\!\right)\!\right\}-2 \dd s\, t\, \dd t\left[\left(C_{6}^{\prime}-C_{6}-\tfrac{1}{2} C_{3}^{\prime} C_{3}-  \tfrac{1}{4} C_{2}p_{1}\right)+\tfrac{1}{2}\!\left\{\left(C_{3}^{\prime}-C_{3}\right) \cdot C_{3}+C_{2} \widetilde{G}_{4}\right\}\right] \\
		\\[-15pt]
		&
        \!\!\! +\dd s\, t\, \dd t C_{2}\left(\tilde{G}_{4}-\tfrac{1}{2} p_{1}\right)-s t\, \dd t\left(C_{3}^{\prime}-C_{3}\right)\left(\tilde{G}_{4}-\tfrac{1}{2} p_{1}\right)+0 
		\\[7pt]
		&=0\,.
	\end{array}
\end{equation}

We note that,
	$$s t\, \dd t C_{2}\left(\tilde{G}_{4}-\tfrac{1}{2} p_{1}\right)+\dd\left(-\tfrac{s t^{2}}{2} C_{2}\left(\tilde{G}_{4}-\tfrac{1}{2} p_{1}\right)\right)=\\-\frac{s t^{2}}{2}\left(C_{3}^{\prime}-C_{3}\right)\left(\tilde{G}_{4}-\tfrac{1}{2} p_{1}\right)-\dd s \tfrac{t^{2}}{2} C_{2}\left(\tilde{G}_{4}-\tfrac{1}{2} p_{1}\right).
$$
Thus, 
\begin{equation}
\renewcommand{\arraystretch}{2}
	\begin{array}{l}
		\hat{\hat{\tilde{G}}}_{7}=t^{2}\left(G_{7}-\tfrac{1}{2} C_{3}\left(\tilde{G}_{4}-\tfrac{1}{2} p_{1}\right)\right)+2 t \dd t C_{6}+2 s t\, \dd t\left(C_{6}^{\prime}-C_{6}\right)-2 \dd s\, t\,\dd t\left(C_{5}+\tfrac{1}{2} C_{2} C_{3}\right) 
        \\
	 \qquad	-\tfrac{1}{2}s t^{2}\left(C_{3}^{\prime}-C_{3}\right)\left(\tilde{G}_{4}-\tfrac{1}{2} p_{1}\right)-\tfrac{1}{2}\dd s\, t^{2} C_{2}\left(\tilde{G}_{4}-\frac{1}{2} p_{1}\right).
	\end{array}
\end{equation}
Now, we want to obtain $\hat{\hat{G}}_{7}$ from  $\hat{\hat{\tilde{G}}}_{7}$
\begin{equation*}
	{\hat{\hat{\tilde{G}}}}_{7}=\hat{\hat{{G}}}_{7}-\tfrac{1}{2} {\hat{\hat{C}}}_{3}\Big({\hat{\hat{\tilde{G}}}}_{4}-\tfrac{1}{2} {\hat{\hat{p}}}_{1}\Big).
\end{equation*}
Also,
\begin{eqnarray}\
		\hat{\hat{C}}_{3}&=&\int_{t^{\prime} \in[0, t]} \hat{\hat{\tilde{G}}}_{4} =\int_{t^{\prime} \in[0, t]} t^{\prime} \tilde{G}_{4}+\dd t^{\prime} C_{3}+s\, \dd t^{\prime}\left(C_{3}^{\prime}-C_{3}\right)-\dd s\, \dd t^{\prime}\, C_{2} 
        \nonumber 
        \\[3pt]
		& =& t C_{3}+s t\left(C_{3}^{\prime}-C_{3}\right)+t \dd s\, C_{2},
\\
       d\hat{\hat{C}}_{3}&=& t \tilde{G}_{4}+ \dd t\, C_{3}+ s \dd \,t (C^{\prime}_{3}-C_{3})- \dd s\, \dd t\, C_{2}= \hat{\hat{\tilde {G}}}_{4}\,.
   \end{eqnarray}
   
   And,
\begin{equation*}
	\begin{array}{ll}
		\hat{\hat{\tilde{G}}}_{4}-\tfrac{1}{2} \hat{\hat{p}}_{1}&={\hat{\hat{G}}}_{4}-\tfrac{1}{4} \hat{\hat{p}}_{1} \\
		&=t\left(G_{4}-\tfrac{1}{4} p_{1}\right)+\dd t\left(C_{3}-\tfrac{1}{2}s_{3}\right)+s \dd t\left(C_{3}^{\prime}-\tfrac{1}{2}s_{3}^{\prime}-C_{3}+\tfrac{1}{2}s_{3}\right)-\dd s\, \dd t\,\left(C_{2}-\tfrac{1}{2}s_{2}\right).
	\end{array}
\end{equation*}
Thus,
  \begin{equation}
\renewcommand{\arraystretch}{1.6}
  \begin{array}{ll}
{\hat{\hat{C}}_{3}}\left(\hat{\hat{G}}_{4}-\tfrac{1}{4} {\hat{\hat{p}}_{1}}\right) \\
=(t C_{3}+s t(C_{3}^{\prime}-C_{3})+t \dd s \, C_{2}) [t(G_{4}-\tfrac{1}{4} p_{1})+\dd t\,(C_{3}-\tfrac{1}{2}s_{3})+s \dd t\,(C_{3}^{\prime}-\tfrac{s_{3}^{\prime}}{2}-C_{3}+\tfrac{1}{2}s_{3}) \\ -\dd s\, \dd t\,(C_{2}-\tfrac{1}{2}s_{2})]
\\[5pt]	
=t^{2} C_{3}\left(G_{4}-\tfrac{1}{4} p_{1}\right)-t \dd t\, C_{3}\left(C_{3}-\tfrac{1}{2}s_{3}\right)-t s\, \dd t C_{3}\left(C_{3}^{\prime}-C_{3}-\tfrac{1}{2}s_{3}^{\prime}+\tfrac{1}{2}s_{3}\right)-t \dd s\, \dd t\, C_{3}\left(C_{2}-\tfrac{1}{2}s_{2}\right) \\
+s t^{2}\left(C_{3}^{\prime}-C_{3}\right)\left(G_{4}-\tfrac{1}{4} p_{1}\right)-s\, t\, \dd t\left(C_{3}^{\prime}-C_{3}\right)\left(C_{3}-\tfrac{1}{2}s_{3}\right)-s^{2} t \,\dd t\left(C_{3}^{\prime}-C_{3}\right)\left(\tfrac{s_{3}}{2}-\tfrac{1}{2}s_{3}^{\prime}\right) \\ -s t\, \dd s\,\dd t\left(C_{3}^{\prime}-C_{3}\right)\left(C_{2}-\tfrac{1}{2}s_{2}\right)+t^{2} \,\dd s C_{2}\left(G_{4}-\tfrac{1}{4} p_{1}\right)+t\, \dd s\, \dd t\, C_{2}\left(C_{3}-\tfrac{1}{2}s_{3}\right)\\+s t\, \dd s\, \dd t\, C_{2}\left(C_{3}^{\prime}-C_{3}-\tfrac{1}{2}s_{3}^{\prime}+\tfrac{1}{2}s_{3}\right)
\\[6pt]
=t^{2} C_{3}\left(G_{4}-\tfrac{1}{4} p_{1}\right)- t\, \dd t\, C_{3} (C_{3}-\tfrac{1}{2}s_{3})+s t^{2}\left(C_{3}^{\prime}-C_{3}\right)\left(G_{4}-\tfrac{1}{4} p_{1}\right)-s t\, \dd t\, C_{3}\left(\tfrac{1}{2}s_{3}-\tfrac{1}{2}s_{3}^{\prime}\right)\\+\tfrac{1}{2}s t\, \dd t\left(C_{3}^{\prime}-C_{3}\right) s_{3}
-\tfrac{1}{2}s^{2} t\, \dd t\left(C_{3}^{\prime}-C_{3}\right) \left(s_{3}-s_{3}^{\prime}\right)+t^{2} \dd s \,C_{2}\left(G_{4}-\tfrac{1}{4} p_{1}\right)+\tfrac{1}{2}t \, \dd s\,\dd t\left(C_{3} s_{2}-C_{2} s_{3}\right) \\
-\tfrac{1}{4}s t\, \dd s\, \dd t \quad\left(C_{3}^{\prime}-C_{3}\right)\left(-{s_{2}}\right)s\, t\, \dd s\, \dd t\, C_{2}\left(s_{3}^{\prime}-s_{3}\right).
\end{array}
\end{equation}
Now, we get 
\begin{equation}
\renewcommand{\arraystretch}{2}
	\begin{aligned}
		{\hat{\hat{G}}}_{7} =&  \; {\hat{\hat{\tilde{G}}}}_{7}+\tfrac{1}{2}{\hat{\hat{C}}}_{3}\left(\hat{\hat{G}}_{4}-\tfrac{1}{4} \hat{\hat{p}}_{1}\right) \\
		=& \; t^{2} G_{7}+2 t\, \dd t C_{6}+ \tfrac{1}{4} t\, \dd \,t C_{3}s_{3}+ 2 s t\, \dd t\left(C_{6}^{\prime}-C_{6}\right)-2t \, \dd s\, \dd t\,\left(C_{5}+\tfrac{1}{2} C_{2} C_{3}\right) \\
		&-  \tfrac{1}{4}s t \, \dd t C_{3}\left(s_{3}-s_{3}^{\prime}\right)+\tfrac{1}{4}s t\, \dd t\left(C_{3}^{\prime}-C_{3}\right) s_{3}-\tfrac{1}{4} s^{2} t\, \dd t\,\left(C_{3}^{\prime}-C_{3}\right)\left(s_{3}-s_{3}^{\prime}\right) \\
		& +\tfrac{1}{4}t\, \dd s\,\dd t\left(C_{3} s_{2}-C_{2} s_{3}\right)+\tfrac{ 1}{4}st\,\dd s\, \dd t\left(C_{3}^{\prime}-C_{3}\right) s_{2}-\tfrac{1}{4}s\, t\, \dd s\, \dd t C_{2}\left(s_{3}^{\prime}-s_{3}\right).
	\end{aligned}
\end{equation}
Finally, we want to check its Bianchi identity.
 \begin{equation}
\renewcommand{\arraystretch}{1.6}
     \begin{array}{ll}
         \dd \hat{\hat{{G}}}_{7}&= \frac{1}{2}t^{2} (G_{4}+\tfrac{1}{4}p_{1})(G_{4}-\tfrac{1}{4}p_{1})+ t\, \dd t\left( C_{3} G_{4} - \tfrac{1}{4}s_{3} (G_{4}+ \tfrac{1}{4}p_{1})\right) 
         \\ 
         &\;\;\;+st\, \dd t \left((C^{\prime}_{3}-C_{3})G_{4} + \tfrac{1}{4} (G_{4}+\tfrac{1}{4}p_{1}) (s_{3}- s^{\prime}_{3})\right) + t\, \dd s\, \dd t\, (-C_{2}G_{4}+ \tfrac{1}{4}s_{2} (G_{4}+\tfrac{1}{4}p_{1}))\\
         &= \tfrac{1}{2}\hat{\hat{\tilde{G}}}_{4}(\hat{\hat{\tilde{G}}}_{4}-\tfrac{1}{2}\hat{\hat{p}}_{1}).
         \end{array}
 \end{equation} 

 \vspace{-5mm} 
\end{proof}


\noindent  \textbf{Note}: If we set $p_{1}, s_{3}, s_{2} = 0$, we will exactly reproduce the results known earlier in \cite[\S 2.1.4]{GSS24-SuGra}\cite[\S 4.1]{GSS24-FluxOnM5}.
\begin{remark}
We should keep in mind that $\hat{\hat{\tilde{G}}}_{7}$ is {\it not} a concordance of concordance between the pair $\hat{\tilde{G}}_{7}$ and $\hat{\tilde{G}}^{\prime}_{7}$ for the same $\tilde{G}_{7}$, as it contains the potential term $C_{3}$ which is {\it not} topological. We have done the calculations by just defining something to be like that, to derive the explicit expressions for $\hat{G}_{7}$ and $\hat{\hat{G}}_{7}$. Alternatively, we can omit together $\tilde{G}_{7}$, it's hatted and doublehatted counterparts, and directly write down the expressions for  $\hat{G}_{7}$ and $\hat{\hat{G}}_{7}$ and show them to satisfy the required equations, as checked explicitly here.

\end{remark}

\subsection{In Twistorial Cohomotopy}

 Now, Hypothesis H {\it generalizes} to heterotic M-Theory where Cohomotopy is enhanced to twistorial Cohomotopy (constructed in \cite[\S 2.45]{FSS23-Char}, \cite[\S 2.14]{FSS22-GS}). Here the target $\infty$-stack is represented by $\mathbb{C}P^{3}\sslash\mathrm{Sp}(2)$ instead of $S^{4}\sslash\mathrm{Sp}(2)$, and we have the Borel equivariantized twistor fibration $t_{\mathbb{H}}\sslash\mathrm{Sp}(2):\mathbb{C}P^{3}\sslash\mathrm{Sp}(2)\longrightarrow{S^{4}}\sslash\mathrm{Sp}(2)$.
 
Following the above, we define {\it twistorial Cohomotopy} 
\begin{equation}
    \zeta^{\tau}(X):=\mathrm{H}^{\tau}(X; \mathbb{C}P^{3})
    \end{equation}
    where we have the {\it twist} $\tau:X\longrightarrow{\mathrm{BSp}}(2)$.

\begin{definition}[\bf Flux densities]
In the case of twistorial cohomotopy, we are concerned with {\it flux densities} of the following form (\cite[\S 2.14]{FSS22-GS}):
\begin{equation}
\renewcommand{\arraystretch}{1.3}
\label{Twistorial flux densities}
	\Omega_{d R}^{1}\left(-; \mathfrak{l}_{S^{4}\sslash \mathrm{Sp}{(2)}} \mathrm{\mathbb{C}P}^{3}\sslash \mathrm{Sp}(2)\right)_{\mathrm{clsd}}:=\left\{\begin{array}{l|l}
		F_{2} & \dd F_{2}=0 \\H_{3} & \dd H_{3}=\tilde{G}_{4}-\tfrac{1}{2} p_{1}(\omega)-F_{2}^{2} \\
		G_{7} & \dd G_{7}=\tfrac{1}{2} \tilde{G}_{4}\left(\tilde{G}_{4}-\tfrac{1}{2} p_{1}(\omega)\right) \\
		\tilde{G}_{4} & \dd \tilde{G}_{4}=0 \\
		\tfrac{1}{2} p_{1}(\omega) & \dd\left(\tfrac{1}{2} p_{1}(\omega)\right)=0
	\end{array}\right\}
\end{equation}
\end{definition}

\begin{definition}[\bf Gauge potentials and gauge transformations]
Given the flux densities \eqref{Twistorial flux densities} on some $U_{i}$, we have the following traditional choice of {\it gauge potentials}
\begin{equation}
\renewcommand{\arraystretch}{1.3}
\label{Twistorial gauge potentials}
	\left\{\begin{array}{l|l}
		C_{3} \in \Omega_{dR}^{3}\left(U_{i}\right) & \dd C_{3}=\tilde{G}_{4} \\
		C_{6} \in \Omega_{dR}^{6}\left(U_{i}\right) & \dd C_{6}=G_{7}-\tfrac{1}{2} C_{3}\left(\tilde{G}_{4}-\tfrac{1}{2} p_{1}(\omega)\right) \\
		B_{2} \in \Omega_{dR}^{2}\left(U_{i}\right) & \dd B_{2}=H_{3}-C_{3}+ \mathrm{CS}(\omega) + A_{1} F_{2}\\
        A_{1} \in \Omega_{dR}^{1}\left(U_{i}\right) & \dd A_{1}= F_{2}\\
    \end{array}\right\}
\end{equation}
and the corresponding {\it gauge transformations} between the gauge potentials
\begin{equation}
\label{Twistorial gauge tranfomrations}
\renewcommand{\arraystretch}{1.3}
\begin{aligned}
&\begin{array}{c}
\left(C_{3}, C_{6}, B_{2}, A_{1}\right) \sim \left(C_{3}^{\prime}, C_{6}^{\prime}, B_{2}^{\prime}, A_{1}^{\prime}\right) \\
\phantom{\left(C_{3}, C_{6}, B_{2}, A_{1}\right)}\vcenter{\hbox{$\Updownarrow$}}\phantom{\left(C_{3}^{\prime}, C_{6}^{\prime}, B_{2}^{\prime}, A_{1}^{\prime}\right)}
\end{array}
\\[1ex]
&\exists \left\{
\begin{array}{l|l}
C_{2} \in \Omega_{dR}^{2}\left(U_{i}\right) & \dd C_{2} = C_{3}^{\prime} - C_{3} \\
C_{5} \in \Omega_{dR}^{5}\left(U_{i}\right) & \dd C_{5} = C_{6}^{\prime} - C_{6} - \tfrac{1}{2} C_{3}^{\prime} C_{3} - \tfrac{1}{4}C_{2}{p_{1}}(\omega) \\
B_{1} \in \Omega_{dR}^{1}\left(U_{i}\right) & \dd B_{1} = B_{2}^{\prime} - B_{2} + C_{2} - \int dt\, \mathrm{Tr}(\omega^{\prime} - \omega) \wedge \omega_{t} - A_{0} F_{2} \\
A_{0} \in \Omega_{dR}^{0}\left(U_{i}\right) & \dd A_{0} = A_{1}^{\prime} - A_{1}
\end{array}
\right\}.
\end{aligned}
\end{equation}

\end{definition}
\begin{definition}[\bf Null concordances]
Given the flux densities in \eqref{Twistorial flux densities}, we say that a {\it null concordance} is
\begin{equation}
	\begin{array}{c}
		\left(\hat{\tilde{{G}}}_{4}, \hat{G}_{7}, \hat{H}_{3}, \hat{F}_{2}, \hat{p}_{1}(\omega)\right) \in \Omega_{\text {dR}}^{1}\left(U_{i} \times[0,1], \mathfrak{l}_{S^{4}\sslash \mathrm{Sp}(2)} \mathrm{CP}^{3}\sslash \mathrm{Sp}(2)\right)_{ \mathrm{clsd} }
        \end{array}
        \end{equation}
        such that
        \begin{equation}
        \label{Twistorial Boundary conditions}
        \renewcommand{\arraystretch}{1.3}
        \begin{array}{l}
		\iota_{0}^{*}\left(\hat{\tilde{G}}_{4}, \hat{G}_{7}, \hat{H}_{3}, \hat{F}_{2}, \hat{p}_{1}(\omega)\right)=0 \\
		\iota_{1}{ }^{*}\left(\hat{\tilde{G}}_{4}, \hat{G}_{7}, \hat{H}_{3}, \hat{F}_{2}, \hat{p}_{1}(\omega)\right)=\left(\tilde{G}_{4}, G_{7}, H_{3}, F_{2}, p_{1}(\omega)\right)
	\end{array}
\end{equation}
and a {\it null} concordance of concordances between a pair of them is given by
\begin{equation}
	\begin{array}{l}
			\left(\hat{\hat{\tilde{G}}}_{4}, {\hat{\hat{G}}}_{7}, {\hat{\hat{H}}}_{3}, {\hat{\hat{F}}}_{2}, \hat{\hat{p}}_{1}(\omega, \omega')\right) \in \Omega_{dR}^{1}(U_{i} \times \underbrace{[0,1]}_{t} \times \underbrace{[0,1]}_{s} ; \mathfrak{l}_{S^{4}\sslash \mathrm{Sp}(2)}\mathrm{CP}^{3}\sslash \mathrm{Sp}(2))_{\mathrm{ clsd} }
            \end{array}
            \end{equation}
            such that
            \begin{equation}
            \label{Twistorial higher boundary conditions}
            \renewcommand{\arraystretch}{2.2}
            \begin{array}{l}
\iota_{s=0}^{*}\left(\hat{\hat{\tilde{G}}}_{4}, {\hat{\hat{G}}}_{7}, {\hat{\hat{H}}}_{3}, {\hat{\hat{F}}}_{2}, \hat{\hat{p}}_{1}\right)=\left(\hat{\tilde{G}}_{4}, \hat{G}_{7}, \hat{H}_{3}, \hat{F}_{2}, \hat{p}_{1}\right) \\
		\iota^{*}_{s=1}\left(\hat{\hat{\tilde{G}}}_{4}, {\hat{\hat{G}}}_{7}, {\hat{\hat{H}}}_{3}, {\hat{\hat{F}}}_{2}, \hat{\hat{p}}_{1}\right)=\left(\hat{\tilde{G}}_{4}^{\prime}, \hat{G}_{7}^{\prime}, \hat{H}_{3}^{\prime}, \hat{F}_{2}^{\prime},\hat{p}_{1}^{\prime}\right) \\
			\iota^{*}_{t=0}\left(\hat{\hat{\tilde{G}}}_{4}, {\hat{\hat{G}}}_{7}, {\hat{\hat{H}}}_{3}, {\hat{\hat{F}}}_{2}, \hat{\hat{p}}_{1}\right)=0 \\
			{ \iota}^{*}_{t=1}\left(\hat{\hat{\tilde{G}}}_{4}, {\hat{\hat{G}}}_{7}, {\hat{\hat{H}}}_{3}, {\hat{\hat{F}}}_{2}, \hat{\hat{p}}_{1}\right) = \operatorname{pr}_{U_{i}}^{*}\left(\tilde{G}_{4}, G_{7}, H_{3}, F_{2}, p_{1}\right)
		\end{array}\\
    \end{equation}
	where $$pr_{U_{i}}: U_{i} \times[0,1]_{S} \longrightarrow U_{i}$$
    \end{definition}

Similarly to the last section:
    
\begin{proposition}[\bf Gauge potentials from null concordances]
Every traditional gauge potential in twistorial cohomotopy arises from the null concordances in the following way:
\begin{equation}
\label{Twistorial gauge potentials from null concordances}
\renewcommand{\arraystretch}{1.5}
	\begin{array}{l}
		
		\begin{array}{l}
			2\mathrm{CS}(\omega)= s_{3}=\int_{[0,1]}\hat{p}_{1}\\

			C_{3}=\int_{[0,1]} \hat{G}_{4}+\tfrac{1}{4} \int_{[0,1]} \hat{p}_{1}=\int_{[0,1]} \hat{\tilde{G}}_{4}.\\
			C_{6}=\int_{[0,1]} \left(\hat{G}_{7}-\tfrac{1}{2}\left(\int_{[0,-]}\hat{\tilde{G}}_{4}\right)\left(\hat{\tilde{G}}_{4}-\tfrac{1}{2} \hat{p}_{1}\right)\right) .\\
			B_{2}=\int_{[0,1]} \hat{H}_{3} \\
            A_{1}=\int_{[0,1]} \hat{F}_{2}\,.
		\end{array}
	\end{array}
\end{equation}
\end{proposition}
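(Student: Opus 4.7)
The plan is to mirror the proof of the tangentially twisted analogue, adapting it to accommodate the new 2-form flux $F_2$ with its 1-form potential $A_1$ and the modified Bianchi identities $\dd H_3 = \tilde{G}_4 - \tfrac{1}{2}p_1(\omega) - F_2^2$ and $\dd B_2 = H_3 - C_3 + \mathrm{CS}(\omega) + A_1 F_2$. The argument will split into two stages: first verify that the formulas \eqref{Twistorial gauge potentials from null concordances} applied to any null concordance yield objects satisfying the gauge-potential Bianchi relations \eqref{Twistorial gauge potentials}, and second show surjectivity by producing, from any given tuple $(C_3, C_6, B_2, A_1)$, an explicit null concordance that recovers these through the same formulas.

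For the verification stage I will use fiber integration $\dd \int_{[0,1]} \hat{X} = \iota_1^* \hat{X} - \iota_0^* \hat{X} - \int_{[0,1]} \dd \hat{X}$ together with the boundary conditions \eqref{Twistorial Boundary conditions}. The checks for $s_3$, $C_3$ and $C_6$ carry over verbatim from the tangentially twisted proof since their hatted Bianchi identities are unchanged, and $\dd A_1 = F_2$ is immediate from $\dd \hat{F}_2 = 0$. The only genuinely new step is $\dd B_2$, where Stokes produces an extra $\int_{[0,1]} \hat{F}_2^2$ that must equal $A_1 F_2$; I would handle this by introducing the internal primitive $\hat{A}_1(t) := \int_{[0,t]} \hat{F}_2$ on the cylinder, which satisfies $\dd \hat{A}_1 = \hat{F}_2$ with $\iota_0^* \hat{A}_1 = 0$ and $\iota_1^* \hat{A}_1 = A_1$, so that $\hat{F}_2 \wedge \hat{F}_2 = \dd(\hat{A}_1 \hat{F}_2)$ and a second application of Stokes yields $\int \hat{F}_2^2 = A_1 F_2$ modulo an exact term which is reabsorbed into the gauge freedom of $B_2$.

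For the surjection stage I propose the explicit ansatz
\begin{equation*}
\renewcommand{\arraystretch}{1.3}
\begin{array}{l}
\hat{p}_1 = t\, p_1 + \dd t\, s_3, \qquad \hat{\tilde{G}}_4 = t\, \tilde{G}_4 + \dd t\, C_3, \qquad \hat{F}_2 = t\, F_2 + \dd t\, A_1, \\
\hat{G}_7 = t^2\, G_7 + 2t\, \dd t\, C_6 + \tfrac{1}{4}\, t\, \dd t\, C_3\, s_3, \\
\hat{H}_3 = t\, H_3 + \dd t\, B_2 + t(1-t)\, A_1\, F_2,
\end{array}
\end{equation*}
whose boundary values and fiber integrals reproduce $(C_3, C_6, B_2, A_1)$ by inspection, since only the $\dd t$-components contribute. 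Closedness of $\hat{F}_2$ is immediate from $\dd A_1 = F_2$, and the Bianchi identities for $\hat{p}_1$, $\hat{\tilde{G}}_4$, $\hat{G}_7$ reduce to the tangentially twisted calculation.

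The main obstacle is the verification $\dd \hat{H}_3 = \hat{\tilde{G}}_4 - \tfrac{1}{2}\hat{p}_1 - \hat{F}_2^2$. Expanding $\hat{F}_2^2 = t^2 F_2^2 + 2t\, \dd t\, A_1 F_2$ shows that the right-hand side carries both a quadratic-in-$t$ piece and a cross term absent from the naive $\dd(t H_3 + \dd t\, B_2)$. The correction $t(1-t)\, A_1 F_2$ is tailored precisely so that its exterior derivative $(1 - 2t)\, \dd t\, A_1 F_2 + (t - t^2) F_2^2$ supplies the missing $-2 t\, \dd t\, A_1 F_2$, after combining with the $-\dd t\, A_1 F_2$ coming from the $A_1 F_2$ contribution to $\dd B_2$, and simultaneously cancels the mismatched linear-in-$t$ piece $-t F_2^2$ sourced by the modified $\dd H_3$, leaving exactly the required $-t^2 F_2^2$. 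Identifying the coefficient $t(1-t)$ is the only genuinely new computational step; once it is in place every remaining check reduces to the algebraic bookkeeping already carried out in the tangentially twisted case.
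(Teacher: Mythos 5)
Your proposal is correct and follows essentially the same route as the paper: fiber-integration (Stokes) checks of the Bianchi identities for $A_1$ and $B_2$ (the other fields carrying over from the tangentially twisted case), followed by surjectivity via exactly the ansatz the paper uses in its subsequent Surjections proposition, including the key $t(1-t)A_1F_2$ term in $\hat{H}_3$ and the cancellation bookkeeping you describe. Your handling of $\int_{[0,1]}\hat{F}_2^{\,2}$ via the primitive $\hat{A}_1(t)=\int_{[0,t]}\hat{F}_2$ is in fact slightly more careful than the paper's, which asserts $\int_{[0,1]}\hat{F}_2^{\,2}=A_1 F_2$ outright (exact for the explicit ansatz, but only up to an exact term for a general null concordance).
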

\begin{proof}
We show the above expressions for the gauge potentials satisfy the expected Bianchi identities as mentioned in \eqref{Twistorial gauge potentials}.

We start with $B_{2}$
\begin{equation}
\renewcommand{\arraystretch}{1.5}
	\begin{array}{ll}
		\dd B_{2}&=\dd \int_{[0,1]} \hat{H}_{3}=\iota^{*}_{1} \hat{H_{3}}-\iota^{*}_{0}\hat{{H}_{3}}-\int_{[0,1]} \dd \hat{H}_{3}=H_{3}-0-\int_{[0,1]}\left(\hat{G}_{4}-\frac{1}{4} \hat{p}_{1}- \hat{F}_{2}^{2}\right)
        \\
        &=H_{3}-0-\int_{[0,1]} \hat{\tilde{G}}_{4}+\frac{1}{2} \int_{[0,1]} \hat{p}_{1}+ \int_{[0,1]}\hat{F}_{2}^{2}
        \\
	&	=H_{3}-C_{3}+\frac{1}{2} \int_{[0.1]} \hat{p}_{1} + A_{1} F_{2}\\
       & =H_{3}-C_{3}+\mathrm{CS}(\omega) + A_{1} F_{2}\,.
	\end{array}
    \end{equation}
    Then, we want to check the $A_{1}$ potential
    \begin{equation}
    \dd A_{1}=\dd \int_{[0,1]} \hat{F}_{2}=\iota^{*}_{1} \hat{F_{2}}-\iota^{*}_{0}\hat{{F}_{2}}-\int_{[0,1]} \dd \hat{F}_{2}=F_{2}-0-0= F_{2}\,.
    \end{equation}
So the above relations satisfy the usual Bianchi identities.
\end{proof}

 \begin{proposition}[\bf Surjections]
    Let us turn to the surjections from the null concordances to the gauge potentials explicitly
    \vspace{-2mm} 
\begin{equation}
\label{Twistorial surjections}
\renewcommand{\arraystretch}{1.4}
	\begin{array}{l}
		\hat{p}_{1}=t{p_{1}}+\dd t\, s_{3} \\
        \hat{\tilde{G}}_{4}=
			t (\left(G_{4}+\tfrac{1}{4}p_{1}\right)+\dd t\, C_{3} = t \tilde{G_{4}}+ \dd t\, C_{3}\\ \hat{G}_{7}=t^{2} G_{7}+2 t \dd t\, C_{6}+\tfrac{1}{4}t \dd t C_{3} s_{3}\\
            \hat{H}_{3}=t H_{3}+\dd t\, B_{2}   + t(1-t) A_{1} F_{2}\\
		\hat{F}_{2}=t F_{2}+ \dd t \,A_{1}\,.	
		\end{array}\\
        \end{equation}
\end{proposition}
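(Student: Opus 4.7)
The plan is to follow the same three-step pattern used in the tangentially twisted case: (1) verify that the claimed expressions in \eqref{Twistorial surjections} satisfy the null-concordance boundary conditions \eqref{Twistorial Boundary conditions}; (2) verify that feeding them into the fiber-integration formulas \eqref{Twistorial gauge potentials from null concordances} returns the prescribed gauge potentials $(C_{3}, C_{6}, B_{2}, A_{1})$; and (3) verify that each expression is closed in the relevant $L_{\infty}$-algebra, i.e.\ satisfies the modified Bianchi identities of \eqref{Twistorial flux densities}. Steps (1) and (2) are routine: since $\iota_{0}^{*} \dd t = \iota_{1}^{*} \dd t = 0$ and fiber integration picks out exactly the $\dd t$-coefficient, one reads off directly that all expressions vanish at $t=0$, restrict to the flux densities at $t=1$ (noting that the new term $t(1-t) A_{1} F_{2}$ conveniently vanishes at both endpoints), and that their fiber integrals reproduce $(C_{3}, C_{6}, B_{2}, A_{1})$; in particular the $t(1-t) A_{1} F_{2}$ piece contains no $\dd t$ and hence drops out of $\int_{[0,1]} \hat{H}_{3} = B_{2}$.

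For step (3), the computations for $\hat p_{1}$, $\hat{\tilde G}_{4}$, $\hat F_{2}$, and $\hat G_{7}$ go through exactly as in the tangentially twisted case, using $\dd s_{3} = p_{1}$, $\dd A_{1} = F_{2}$, $\dd F_{2} = 0$, $\dd C_{3} = \tilde G_{4}$. The only genuinely new check, and which I expect to be the main obstacle, is the twisted Bianchi identity $\dd \hat H_{3} = \hat{\tilde G}_{4} - \tfrac{1}{2}\hat p_{1} - \hat F_{2}^{2}$. Here one must verify that the particular polynomial $t(1-t)$ multiplying $A_{1} F_{2}$ in the ansatz for $\hat H_{3}$ is precisely the one that simultaneously generates the $-t^{2} F_{2}^{2}$ contribution in the non-$\dd t$ sector and the $-2t\, \dd t\, A_{1} F_{2}$ contribution in the $\dd t$ sector --- so the shape of this coefficient is forced by two independent constraints at once.

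Concretely, the calculation hinges on $\dd(t(1-t) A_{1} F_{2}) = (1-2t)\,\dd t\, A_{1} F_{2} + t(1-t) F_{2}^{2}$, combined with $\dd(t H_{3}) = \dd t\, H_{3} + t(\tilde G_{4} - \tfrac{1}{2} p_{1} - F_{2}^{2})$ and $\dd(\dd t\, B_{2}) = -\dd t\,(H_{3} - C_{3} + \mathrm{CS}(\omega) + A_{1} F_{2})$. The $F_{2}^{2}$ terms then combine to $-t^{2} F_{2}^{2}$, matching $\hat F_{2}^{2}$ modulo its $\dd t$-part, and the $A_{1} F_{2}$ $\dd t$-terms combine to $-2t\,\dd t\, A_{1} F_{2}$, matching the $\dd t$-part of $\hat F_{2}^{2}$; the remaining terms assemble into $\hat{\tilde G}_{4} - \tfrac{1}{2}\hat p_{1}$ exactly as in the tangentially twisted proposition. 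This yields the required Bianchi identity and exhibits \eqref{Twistorial surjections} as a valid null concordance whose image under \eqref{Twistorial gauge potentials from null concordances} is the prescribed tuple $(C_{3}, C_{6}, B_{2}, A_{1})$, establishing surjectivity.
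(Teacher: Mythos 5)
Your proposal is correct and follows essentially the same route as the paper: verify the endpoint conditions and the fiber-integration formulas directly (the $t(1-t)A_{1}F_{2}$ term vanishing at $t=0,1$ and contributing no $\dd t$-component), then check the modified Bianchi identities, with the key new computation being $\dd\hat{H}_{3}=\hat{\tilde{G}}_{4}-\tfrac{1}{2}\hat{p}_{1}-\hat{F}_{2}^{2}$ via $\dd\big(t(1-t)A_{1}F_{2}\big)$ together with $\hat{F}_{2}^{2}=t^{2}F_{2}^{2}+2t\,\dd t\,A_{1}F_{2}$, exactly as in the paper's proof. The remaining identities are deferred to the tangentially twisted case in both treatments, so there is nothing to add.
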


\begin{proof}
We immediately see that they are consistent when plugged in \eqref{Twistorial gauge potentials from null concordances} and they satisfy the usual null concordance conditions at $t=0,1$.

We want to check whether these relations satisfy the Bianchi identities in \eqref{Twistorial gauge potentials}.
We start with $\hat{H}_{3}$.
        \begin{equation}
        \hspace{-5mm}
        \renewcommand{\arraystretch}{1.4}
        \begin{array}{ll}
\dd \hat{H}_{3}&=\dd t\, H_{3}+t\left(G_{4}-\tfrac{1}{4} p_{1}- F_{2}^{2}\right)-\dd t\left(H_{3}-C_{3}+\tfrac{1}{2}s_{3}+ A_{1} F_{2}\right) + t(1-t) F_{2}^{2} + (1-2t)\dd t A_{1} F_{2}\\
&=t\left(G_{4}-\tfrac{1}{4} p_{1}- F_{2}^{2}\right)+\dd t\left(C_{3}-\tfrac{1}{2}s_{3}- A_{1} F_{2}\right)  -2t \dd t A_{1} F_{2} -t^{2} F_{2}^{2}\\ 
&= t\left(G_{4}-\tfrac{1}{4} p_{1}\right)+\dd t\left(C_{3}-\tfrac{1}{2}s_{3}\right) -2t \dd t  A_{1} F_{2}  -t^{2} F_{2}^{2}
\\ 
&= \hat{\tilde{G}}_{4}- \tfrac{1}{2}\hat{p}_{1}-\hat{F}_{2}^{2}\,.
\end{array}
\end{equation}
Also, we find $$\hat{F}_{2}^{2}= (tF_{1}+ \dd t\, A_{1})(tF_{1}+ \dd t\, A_{1})= 2t\dd t\, A_{1} F_{2}+ t^{2} F_{2}^{2}\,.
$$
We now want to check for $\hat{F}_{2}$
\begin{equation}
\dd\hat{F}_{2}= \dd(tF_{2}+\dd t\, A_{1})= \dd t\, F_{2}+0 - \dd t\, F_{2}=0.
\end{equation}
So the relations are consistent.
\end{proof}

Let us turn to the concordance of concordances.

\begin{proposition}[\bf Gauge equivalences from null-concordance of concordances]
In twistorial cohomotopy, the null concordances of concordances surject onto usual gauge transformations between the usual gauge potentials
\begin{equation}
\label{Twistorial gauge equivalnes from null concordances of concordances}
\renewcommand{\arraystretch}{1.5}
\begin{array}{l}
	C_{2}=\int_{s \in{[0,1]}} \int_{t \in[0,1)} \hat{\hat{\tilde{G}}}_{4}\\ B_{1}=\int_{s \in{[0,1]}} \int_{t \in{[0,1]}} {\hat{\hat{H}}}_{3}\\C_{5}=\int_{s \in{[0,1]}}\int_{ t \in{[0,1]}} (\hat{\hat{{G}}}_{7}-\tfrac{1}{2}\hat{\hat{C}}_{3}(\hat{\hat{\tilde{G}}}_{4}-\tfrac{1}{2}\hat{\hat{p}}_{1}))-\tfrac{1}{2} C_{2} C_{3}\\ A_{0}= \int_{s \in{[0,1]}} \int_{t \in[0,1)} \hat{\hat{F}}_{2}\\
    s_{2}= \int_{s \in{[0,1]}} \int_{t \in[0,1)} \hat{\hat{p}}_{1}\,.
    \end{array}
\end{equation}
\end{proposition}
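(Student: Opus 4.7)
The plan is to verify each of the five identities in \eqref{Twistorial gauge tranfomrations} by applying Stokes' theorem on $[0,1]_s \times [0,1]_t$, using the boundary conditions \eqref{Twistorial higher boundary conditions} together with the twistorial Bianchi identities \eqref{Twistorial flux densities}. Schematically, for any integrand $\hat{\hat{\alpha}}$ I would write
\begin{equation*}
\dd \!\int_{s,t} \hat{\hat{\alpha}}
\,=\, \int_{t}\! \iota_{s=1}^{*}\hat{\hat{\alpha}}
- \int_{t}\! \iota_{s=0}^{*}\hat{\hat{\alpha}}
- \int_{s}\!\bigl(\iota_{t=1}^{*}\hat{\hat{\alpha}} - \iota_{t=0}^{*}\hat{\hat{\alpha}}\bigr)
+ \int_{s,t} \dd \hat{\hat{\alpha}},
\end{equation*}
and then evaluate the boundary pieces using \eqref{Twistorial higher boundary conditions}: the $s$-boundary collapses to primed versus unprimed hat-quantities (whose $t$-integrals are the gauge potentials by \eqref{Twistorial gauge potentials from null concordances}), and the $t=0$ boundary vanishes while $t=1$ gives the fluxes pulled back from $U_i$.

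For the three identities not involving $F_2$, namely those for $C_2$, $C_5$ and $s_2$, I would copy the computations from the tangentially twisted proof essentially verbatim, since the Bianchi identities obeyed by $\hat{\hat{\tilde{G}}}_{4}$, $\hat{\hat{p}}_{1}$ and $\hat{\hat{\tilde{G}}}_{7}$ are structurally the same in both cases. The identity $\dd A_{0} = A_{1}^{\prime} - A_{1}$ is then the most immediate: applying the Stokes template to $\hat{\hat{F}}_{2}$ and using $\dd \hat{\hat{F}}_{2} = 0$ together with $\int_{t}\hat{F}_{2} = A_{1}$ and $\int_{t}\hat{F}_{2}^{\prime}= A_{1}^{\prime}$ finishes it.

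The only genuinely new case is $B_{1}$. The twistorial Bianchi identity reads $\dd \hat{\hat{H}}_{3}= \hat{\hat{\tilde{G}}}_{4} - \tfrac{1}{2}\hat{\hat{p}}_{1} - \hat{\hat{F}}_{2}^{\,2}$, so the Stokes computation yields
\begin{equation*}
\dd B_{1} \,=\, B_{2}^{\prime} - B_{2} + C_{2} - \tfrac{1}{2}s_{2} - \int_{s,t} \hat{\hat{F}}_{2}^{\,2},
\end{equation*}
and matching \eqref{Twistorial gauge tranfomrations} reduces the problem to establishing $\int_{s,t} \hat{\hat{F}}_{2}^{\,2} = A_{0}F_{2}$.

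This quadratic contribution is what I expect to be the real obstacle. To evaluate it, I would record the natural extension of \eqref{Twistorial surjections} to the double-hat level, $\hat{\hat{F}}_{2} = tF_{2} + \dd t\, A_{1} + s\, \dd t\,(A_{1}^{\prime}-A_{1}) - \dd s\, \dd t\, A_{0}$, and expand $\hat{\hat{F}}_{2}\wedge \hat{\hat{F}}_{2}$. Because $\dd t \wedge \dd t = 0$ annihilates most cross terms, only the single mixed term $-2t\, \dd s\, \dd t\, A_{0} F_{2}$ contributes a $\dd s\, \dd t$ component, and integrating using the convention $\int_{[0,1]^2}(-\dd s\, \dd t) = 1$ (fixed by the $s_{2}$ calculation) gives $2A_{0}F_{2}\int_{0}^{1} t\, \dd t = A_{0}F_{2}$ as required. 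The only real nuisance here is tracking signs through the wedge reorderings of the odd form $A_{1}$ with the even form $F_{2}$ and the differentials $\dd s, \dd t$, but this is mechanical; once it is in place, all five identities in \eqref{Twistorial gauge tranfomrations} follow.
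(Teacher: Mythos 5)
Your proposal is correct and follows essentially the same route as the paper: the same Stokes/fibre-integration template on $[0,1]_s\times[0,1]_t$ with the boundary conditions \eqref{Twistorial higher boundary conditions}, the same deferral of the $C_2$, $C_5$, $s_2$ cases to the tangentially twisted computation, and the same explicit checks for $A_0$ and $B_1$, with the latter reduced to $\int_{s,t}\hat{\hat{F}}_2^{\,2}=A_0F_2$. Your expansion of $\hat{\hat{F}}_2^{\,2}$ via the double-hat representative is precisely the computation the paper records in its subsequent Surjections proposition, whereas in the proof of the present statement it only asserts this identity, so your write-up just makes that step explicit.
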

\begin{proof}
We want to check whether these relations satisfy the Bianchi identities in \eqref{Twistorial gauge tranfomrations}.

Firstly, we want to check $B_{1}$.
 \begin{equation}
\renewcommand{\arraystretch}{1.8}
  \begin{array} {l}  
 \dd B_{1}=\dd \int_{s\in{[0,1]}} \int_{t\in{[0,1]}}{\hat{\hat{H}}}_{3}=\iota^{*}_{s=1}\int_{t \in{[0,1]}} {\hat{\hat{H}}}_{3}-\iota^{*}_{s=0} \int_{t \in{[0,1]}}{\hat{\hat{H}}}_{3}-\int_{s \in{[0,1]}} \dd \int_{t \in{[0,1]}} {\hat{\hat{H}}}_{3} \\

	=\int_{t\in[0,1]} \iota^{*}_{s=1} \hat{\hat{H}}_{3}-\int_{t \in[0,1]} \iota^{*}_{s=0} {\hat{\hat{H}}}_{3}-\int_{s \in[0,1]}(\iota^{*}_{t=1} \hat{\hat{H}}_{3}-\iota^{*}_{t=0} \hat{\hat{H}}_{3})+\int_{s \in[0,1]}\int_{ t\in{[0,1]}} \dd {\hat{\hat{H}}}_{3} \\
    =\int_{t\in[0,1]}\hat{H^{\prime}}_{3}- \int_{t\in[0,1]}\hat{H}_{3}-0+\int_{s\in[0,1]}\int_{t\in[0,1]}\hat{\hat{\tilde{G}}}_{4}-\tfrac{1}{2}\int_{s\in[0,1]}\int_{t\in[0,1]}\hat{\hat{p}}_{1}-\int_{s\in[0,1]}\int_{t\in[0,1]}\hat{\hat{F}}_{2}^{2} \\
    =B_{2}^{\prime}-B_{2}+C_{2}-\tfrac{1}{2}\int_{s\in[0,1]}\int_{t\in[0,1]}\hat{\hat{p}}_{1}- A_{0}F_{2}\\
    =B_{2}^{\prime}-B_{2}+C_{2}-\tfrac{1}{2}s_{2}- A_{0}F_{2}\,.
    \end{array}
    \end{equation}
    Next, we want to check $A_{0}$.
    \begin{equation}
\renewcommand{\arraystretch}{1.7}
    \begin{array}{ll}
     \dd A_{0}=\dd \int_{s\in{[0,1]}} \int_{t\in{[0,1]}}{\hat{\hat{F}}}_{2}=\iota^{*}_{s=1}\int_{t \in{[0,1]}} {\hat{\hat{F}}}_{2}-\iota^{*}_{s=0} \int_{t \in{[0,1]}}{\hat{\hat{F}}}_{2}-\int_{s \in{[0,1]}} \dd \int_{t \in{[0,1]}} {\hat{\hat{F}}}_{2}\\
     =\int_{t\in[0,1]} \iota^{*}_{s=1} \hat{\hat{F}}_{2}-\int_{t \in[0,1]} \iota^{*}_{s=0} \hat{\hat{F}}_{2}-\int_{s \in[0,1]}\left(\iota^{*}_{t=1} \hat{\hat{F}}_{2}-\iota^{*}_{t=0} \hat{\hat{F}}_{2}\right)+\int_{s \in[0,1]}\int_{ t\in[0,1]} \dd {\hat{\hat{F}}}_{2} \\
    =\int_{t\in[0,1]}\hat{F^{\prime}}_{2}- \int_{t\in[0,1]}\hat{F_{2}}-\int_{s\in{[0,1]}}\hat{F}_{2}+ \int_{s\in{[0,1]}} \int_{t\in{[0,1]}}{\dd\hat{\hat{F}}}_{2}\\
    =A_{1}^{\prime}-A_{1}\,.
    \end{array}
    \end{equation}
    So they are consistent with our expectations.
    \end{proof}
    
    Let us now turn to the surjections from the  null concordance of concordances to the gauge transformations between the gauge potentials in twistorial cohomotopy explicitly.
    
    \begin{proposition}[\bf Surjections] The surjections are given by setting:
    \begin{equation}
    \label{Twistorial higher surjections}
\renewcommand{\arraystretch}{1.5}
    \begin{array}{ll}
    \hat{\hat{p}}_{1}&=t p_{1}+\dd t s_{3}+s\dd t\left(s_{3}^{\prime}-s_{3}\right)-\dd s\, \dd t\, s_{2},
    \\
        {\hat{\hat{\tilde{G}}}}_{4} &=t{\tilde{G}_{4}}+\dd t \,C_{3}+s \dd t\left(C_{3}^{\prime}-C_{3}\right)-\dd s\,\dd t\, C_{2},
        \\
        {\hat{\hat{G}}}_{7}
		&= t^{2} G_{7}+2 t \dd t\, C_{6}+ \tfrac{1}{4} t\dd t\, C_{3}s_{3}+ 2 s\, t\, \dd t \left(C_{6}^{\prime}-C_{6}\right)-2 t\, \dd s\, \dd t\left(C_{5}+\tfrac{1}{2} C_{2}, C_{3}\right) 
        \\
		&-  \tfrac{1}{4}s t\,\,\dd t C_{3}\left(s_{3}-s_{3}^{\prime}\right)+\tfrac{1}{4}s \,t\, \dd t\left(C_{3}^{\prime}-C_{3}\right) s_{3}-\tfrac{1}{4} s^{2} t \dd t\left(C_{3}^{\prime}-C_{3}\right)\left(s_{3}-s_{3}^{\prime}\right) 
        \\
		& +\tfrac{1}{4}t\, \dd s\, \dd t(\left(C_{3} s_{2}-C_{2} s_{3}\right)+\tfrac{1}{4} st \dd s\, \dd t\left(C_{3}^{\prime}-C_{3}\right) s_{2}-\tfrac{1}{4}s t\, \dd s\, \dd t C_{2}\left(s_{3}^{\prime}-s_{3}\right),
         \\[4pt]
    \hat{\hat{H}}_{3}&= t H_{3}+ \dd t\, B_{2} + t(1-t) A_{1} F_{2}
    + s\dd t\, (B_{2}^{\prime}-B_{2})\\
    &+ s t (1-t) (A_{1}^{\prime}- A_{1}) F_{2} + \dd s\, (t-t^2) A_{0} F_{2}- \dd s\, \dd t\, B_{1},
    \\[3pt]
   \hat{\hat{F}}_{2}&= tF_{2}+ \dd t A_{1} + s\dd t (A_{1}^{\prime}-A_{1})- \dd s\, \dd t\, A_{0}\,.
   \end{array}
    \end{equation}
    \end{proposition}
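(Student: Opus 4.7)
The plan is to verify three properties of the displayed formulas: (a) the boundary conditions at $t=0,1$ and $s=0,1$ prescribed in \eqref{Twistorial higher boundary conditions}, (b) recovery of the named gauge-transformation data $(C_2,C_5,B_1,A_0,s_2)$ upon substitution into \eqref{Twistorial gauge equivalnes from null concordances of concordances}, and (c) the twistorial Bianchi identities of \eqref{Twistorial flux densities} for the doubly-hatted forms on $U_i\times[0,1]_t\times[0,1]_s$. Since the formulas for $\hat{\hat{p}}_1$, $\hat{\hat{\tilde{G}}}_4$ and $\hat{\hat{G}}_7$ coincide with those already verified in the tangentially twisted case, their boundary conditions, integration-recovery and Bianchi identities carry over verbatim; the genuinely new computations concern only $\hat{\hat{F}}_2$ and the $F_2$-coupled additions in $\hat{\hat{H}}_3$.

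The boundary checks are immediate: at $t=0$ every term of $\hat{\hat{F}}_2$ and $\hat{\hat{H}}_3$ carries $t$ or $\dd t$ (and $\dd s\,\dd t$ dies under $\iota_{t=0}^*$); at $t=1$ the surviving terms give the projections $\mathrm{pr}_{U_i}^*F_2$ and $\mathrm{pr}_{U_i}^*H_3$; and at $s=0,1$ the remaining formulas restrict to the single-hatted surjections $\hat F_2, \hat H_3$ and their primed counterparts from \eqref{Twistorial surjections}. Recovery of $A_0$ and $s_2$ via the double integral is immediate since only the $\dd s\,\dd t$-stratum of $\hat{\hat{F}}_2$ and $\hat{\hat{p}}_1$ contributes; and for $B_1$ the piece $\dd s\,(t-t^2)A_0 F_2$ inside $\hat{\hat{H}}_3$ is precisely what supplies the extra $-A_0 F_2$ correction demanded by \eqref{Twistorial gauge tranfomrations}.

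The closure $\dd\hat{\hat{F}}_2=0$ is a short direct check: the $\dd t\,F_2$ from $\dd(tF_2)$ cancels $-\dd t\,F_2$ from $\dd(\dd t\,A_1)$, and the $\dd s\,\dd t$-contributions from $\dd(s\,\dd t(A_1'-A_1))$ and $\dd(-\dd s\,\dd t\,A_0)$ cancel via $\dd A_0 = A_1'-A_1$. The substantive task is the identity $\dd\hat{\hat{H}}_3 = \hat{\hat{\tilde{G}}}_4 - \tfrac{1}{2}\hat{\hat{p}}_1 - \hat{\hat{F}}_2^{\,2}$. I would first expand
\begin{equation*}
\hat{\hat{F}}_2^{\,2} \;=\; t^2 F_2^{\,2} \,+\, 2t\,\dd t\,A_1 F_2 \,+\, 2st\,\dd t\,(A_1'-A_1)F_2 \,-\, 2t\,\dd s\,\dd t\,A_0 F_2,
\end{equation*}
all other cross-terms vanishing because $\dd t\wedge\dd t=0$ and $(\dd s\,\dd t)^2=0$. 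Then I would compute $\dd\hat{\hat{H}}_3$ stratum-by-stratum in $\{1,\,\dd t,\,s\,\dd t,\,\dd s\,\dd t\}$, feeding in $\dd H_3=\tilde G_4-\tfrac{1}{2}p_1-F_2^{\,2}$, $\dd B_2 = H_3-C_3+\tfrac{1}{2}s_3+A_1 F_2$ (and its primed version), and $\dd B_1 = B_2'-B_2+C_2-\tfrac{1}{2}s_2-A_0 F_2$. The pure-$t$ stratum collapses as $-tF_2^{\,2}+t(1-t)F_2^{\,2}=-t^2 F_2^{\,2}$; the $\dd t$ and $s\,\dd t$ strata combine the $A_1 F_2$ contribution from $\dd B_2$ (resp.\ $\dd(B_2'-B_2)$) with the $(1-2t)$-weighted derivative of the $t(1-t)A_1 F_2$ correction to yield $-2tA_1 F_2$ and $-2st(A_1'-A_1)F_2$; and in the $\dd s\,\dd t$ stratum the $+A_0 F_2$ inside $\dd B_1$ combines with the $-(1-2t)A_0 F_2$ from $\dd(\dd s\,(t-t^2)A_0 F_2)$ to produce $+2tA_0 F_2$, matching the right-hand side. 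The two $\dd s$-only contributions $\dd s\,t(1-t)(A_1'-A_1)F_2$ and $-\dd s\,(t-t^2)(A_1'-A_1)F_2$ cancel identically.

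The principal obstacle will be this four-stratum bookkeeping for the $\hat{\hat{H}}_3$ identity: the three coefficients $t(1-t)$, $st(1-t)$ and $(t-t^2)$ in the $F_2$-coupled summands of $\hat{\hat{H}}_3$ have evidently been chosen so that each stratum conspires correctly, and one must track sign conventions carefully when commuting $\dd$ past the odd form $\dd s$. With this identity in hand, surjectivity onto gauge transformations follows from the observation that the formulas explicitly realise any prescribed quintuple $(C_2, C_5, B_1, A_0, s_2)$ as the output of the double integral built out of these $\hat{\hat{(-)}}$-expressions.
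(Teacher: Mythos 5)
Your proposal is correct and follows essentially the same route as the paper: the paper likewise carries the checks for $\hat{\hat{p}}_1$, $\hat{\hat{\tilde{G}}}_4$, $\hat{\hat{G}}_7$ over from the tangentially twisted case, expands $\hat{\hat{F}}_2^{\,2}=t^2F_2^{\,2}+2t\,\dd t\,A_1F_2+2st\,\dd t\,(A_1'-A_1)F_2-2t\,\dd s\,\dd t\,A_0F_2$ exactly as you do, and verifies $\dd\hat{\hat{F}}_2=0$ and $\dd\hat{\hat{H}}_3=\hat{\hat{\tilde{G}}}_4-\tfrac{1}{2}\hat{\hat{p}}_1-\hat{\hat{F}}_2^{\,2}$ by the same direct computation, with the boundary conditions and double-integral recovery noted as immediate. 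One minor slip of wording: the term $\dd s\,(t-t^2)A_0F_2$ carries no $\dd t$ and therefore does not enter the double integral recovering $B_1$ (that recovery comes solely from $-\dd s\,\dd t\,B_1$); its actual role is the one you correctly assign it later, namely cancelling the pure-$\dd s$ stratum and supplying the $-(1-2t)A_0F_2$ that combines with the $+A_0F_2$ from $-\dd s\,\dd t\,\dd B_1$ to give the required $+2t\,A_0F_2$ in the $\dd s\,\dd t$ stratum.
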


    \begin{proof}
    We can see that \eqref{Twistorial gauge equivalnes from null concordances of concordances} is satisfied and also the usual conditions for null concordance of concordances at $s=0,1$ and $t=0,1$.
    
    Similarly, we can compute 
    \begin{equation}
        \hat{\hat{F}}_{2}^{2}=t^{2} F_{2}^{2} + 2t \dd t A_{1} F_{2}+ 2 st \dd t (A_{1}^{\prime}-A_{1})F_{2}- 2 t \dd s\, \dd tA_{0} F_{2}\,.
\end{equation}
Let us check these equations to be consistent with \eqref{Twistorial gauge tranfomrations}.
We want to start with
\begin{equation}
\begin{array}{l}
\dd\hat{\hat{F}}_{2}= \dd t F_{2} -\dd t F_{2} + \dd s\, \dd t (A_{1}^{\prime}-A_{1})+0-\dd s\, \dd t (A_{1}^{\prime}-A_{1})=0.
\end{array}
\end{equation}
    Now, we want to see 
\begin{equation}
\renewcommand{\arraystretch}{1.4}
\begin{array}{l}
   \dd\hat{\hat{H}}_{3}= \dd t\, H_{3}+ t (G_{4}-\tfrac{1}{4}p_{1}- F_{2}^{2}) -\dd t\, (H_{3}-C_{3}+\tfrac{1}{2} s_{3} + A_{1}F_{2}) + t(1-t) F_{2}^{2} + (1-2t)\dd \,t A_{1}F_{2}\\- \dd s\, \dd t (B_{2}^{\prime}-B_{2}+C_{2}-\tfrac{1}{2}s_{2}-A_{0}F_{2})+ \dd s\, \dd t \,(B_{2}^{\prime}-B_{2}) -s \dd t (-C_{3}^{\prime}+\tfrac{1}{2}s_{3}^{\prime}+A_{1}^{\prime}F_{2}+C_{3} - \tfrac{1}{2}s_{3}- A_{1}F_{2})\\ + \dd s t(1-t) (A_{1}^{\prime}-A_{1})F_{2}+ s(1-2t)\dd t\, (A_{1}^{\prime}-A_{1})F_{2} -ds (1-2t)\dd t\, A_{0}F_{2}- \dd s\, t(1-t) (A_{1}^{\prime}-A_{1})F_{2}
   \\[5pt]
    = (t \tilde{G}_{4}+\dd t\, C_{3}+s \dd t\,\left(C_{3}^{\prime}-C_{3}\right)-\dd s \dd\, t C_{2} )- \tfrac{1}{2}(t p_{1}+\dd t s_{3}+s \dd t\,\left(s_{3}^{\prime}-s_{3}\right)-\dd s \dd t \,s_{2})\\- (t^{2} F_{2}^{2} + 2t \dd t\, A_{1} F_{2}+ 2 st\, \dd t\, (A_{1}^{\prime}-A_{1})F_{2}- 2 t \dd s\, \dd t\,A_{0} F_{2})
    \\[5pt]
    =\hat{\hat{\tilde{G}}}_{4}-\tfrac{1}{2}\hat{\hat{p}}_{1}
-\hat{\hat{F}}_{2}^{2}
\end{array}
    \end{equation}
as expected from the Bianchi identities.
\end{proof}
\begin{remark}
In this section on twistorial cohomotopy, we have only proved the results that new or are different from the tangentially twisted case; the ones not proven here follow the same results from the last section on tangentially twisted gauge potentials.
\end{remark}
\begin{remark}
The Borel-equivariantized quaternionic Hopf fibration $$h_{\mathbb{H}}\sslash \mathrm{Sp}(2):S^{7}\sslash\mathrm{Sp}(2)\longrightarrow{S^{4}}\sslash\mathrm{Sp}(2)$$ factorizes through the 7d Borel-equivariantized complex Hopf fibration $$h_{\mathbb{C}}\sslash\mathrm{Sp}(2):S^{7}\sslash\mathrm{Sp}(2)\rightarrow{\mathbb{C}P^{3}}\sslash\mathrm{Sp}(2)$$ and the Borel-equivariantized twistor fibration 
$$
t_{\mathbb{H}}\sslash\mathrm{Sp}(2):{\mathbb{C}P^{3}}\sslash\mathrm{Sp}(2)\longrightarrow{S^{4}}\sslash\mathrm{Sp}(2).
$$
    If we add the 7d Borel-equivariantized complex Hopf fibration $$h_{\mathbb{C}}\sslash\mathrm{Sp}(2):S^{7}\sslash\mathrm{Sp}(2)\rightarrow{\mathbb{C}P^{3}}\sslash\mathrm{Sp}(2)$$
    into the story then we have one more generator (on top of the twistorial cohomotopy fluxes) corresponding to the flux \cite[\S 2.14]{FSS22-GS}
    \begin{equation}
   \begin{array}{l}
       ({h}_{1},...)\in \Omega^{1}_{\mathrm{dR}}(-;\mathfrak{l}_{\mathbb{C}P^{3}\sslash\mathrm{Sp(2)}_{S^{4}\sslash\mathrm{Sp(2)}}}(S^{7}\sslash\mathrm{Sp(2)))}_{\mathrm{clsd}}
   \end{array}    
   \end{equation}
   where the dots denote the fluxes already mentioned in \eqref{Twistorial flux densities}.
   
   Thus on top of the other fluxes, we have
   
    \begin{equation}
    \label{Extra Flux}
        h_{1}\in \Omega^{1}_{\mathrm{dR}}(U_{i}), \hspace{0.2cm} \dd h_{1}=F_{2}
    \end{equation}
    and the corresponding gauge potential (on top of \eqref{Twistorial gauge potentials}) to be 
    \begin{equation}
    \label{Extra gauge potential}
\phi\in\Omega^{0}_{\mathrm{dR}}(U_{i}), \hspace{0.2cm} \dd\phi= h_{1}-A_{1}\,.
    \end{equation}
   The corresponding null concordance is
   \begin{equation}
   \begin{array}{l}
       (\hat{h}_{1},...)\in \Omega^{1}_{\mathrm{dR}}({U_{i}\times[0,1];\mathfrak{l}_{\mathbb{C}P^{3}\sslash\mathrm{Sp(2)}_{S^{4}\sslash\mathrm{Sp(2)}}}(S^{7}\sslash\mathrm{Sp(2))}})_{\mathrm{clsd}};\qquad
       \iota^{*}_{0}\hat{h}_{1}= 0, \hspace{0.2cm}\iota^{*}_{1}\hat{h}_{1}= h_{1}
   \end{array}    
   \end{equation}
   where the dots denote the null concordances already discussed earlier. There is a surjection from the null concordance $\hat{h}_{1}$ to the usual gauge potential $\phi$, given by
   \begin{equation}
       \phi= \int_{t\in[0,1]}\hat{h}_{1}\,.
   \end{equation}
   and we check its consistency.
   \begin{equation}
   \renewcommand{\arraystretch}{1.5}
   \begin{array}{ll}
       \dd\phi&= \dd \int_{[0,1]} \hat{h}_{1}
       \\
       &=\iota^{*}_{1} \hat{h}_{1}-\iota^{*}_{0}\hat{{h}}_{1}-\int_{[0,1]} \dd \hat{h}_{1}=h_{1}-0-\int_{[0,1]}  \hat{F}_{2}= h_{1}- A_{1}\,.
       \end{array}
   \end{equation}
   Thus, it is consistent with the Bianchi identity \eqref{Extra gauge potential}.   We now give the explicit form of the surjection  from the null concordance $\hat{h}_{1}$ to the usual gauge potential $\phi$ to be
   \begin{equation}
       \hat{h}_{1}= t^{2}h_{1}+2t\dd t \, \phi + t(1-t)A_{1}\,.
   \end{equation}
   We can easily check the expected conditions at t=0,1 for this null concordance.

   Also, we can calculate 
   \begin{equation}
      \dd\hat{h}_{1}= t F_{2} + \dd t\, A_{1} = \hat{F}_{2}
   \end{equation}
   which is the expected Bianchi identity \eqref{Extra Flux}.

   For the null-concordance of concordances, it's trivial for the $h_{1}$ flux, as one can convince oneself easily due to the absence of a `-1-form'.

\end{remark}

\section{Gauge potentials on M5-probes of 11D SuGra in Equivariant Twistorial cohomotopy}
Now, we want to take a step further and place the M5 brane on an orbifold. We will be interested in the case of $\mathrm{Sp}(1)$ parametrized $\mathbb{Z}_{2}$-equivariant twistorial cohomotopy, as these two commute with each other as subgroups of $\mathrm{Sp}(2)$ \footnote { Since $\mathbb{Z}_{2}$ and $\mathrm{Sp}(1)$ commute inside $\mathrm{Sp}(2)$, the homotopy quotient $(\mathrm{CP}^{3}\sslash \mathrm{Sp}(1))$ still admits the structure of a $G$-space for $G= \mathbb{Z}_{2}$, fibered over $\mathrm{BSp}(1)$. For the case of $\mathbb{Z}_{2}$-equivariant twistorial cohomotopy, the non-trivial fixed locus is $(\mathrm{CP}^{3}\sslash \mathrm{Sp}(1))^{\mathbb{Z}_{2}}\simeq S^{2}\sslash \mathrm{Sp}(1)$, apart from the obvious $\mathrm{CP}^{3}\sslash\mathrm{Sp}(1)$. Here, we care about the non-abelian character map for the case of classifying spaces that are equivariantly (namely, fixed locus wise) simply connected. If we drop the assumption of stagewise simple-connectedness, things get more involved.} The particular choice is discussed in \cite[p.7]{SS20-Equitwistorial}.

\begin{definition}[\bf Flux densities] In the case of equivariant twistorial cohomotopy \cite[\S 2.48]{SS20-Equitwistorial}, we are concerned with {\it flux densities} of the following form (\cite[\S 1.1]{SS20-Equitwistorial}):
\begin{equation} 
\begin{aligned}
\label{Equitwistorial flux densities}
\hspace{-3mm}
	\Omega_{\rm d R}^{1}
    \left(\! \orbisingular X\sslash{\mathbb{Z}_{2}}\,; \; \mathfrak{l}_{\orbisingular} (S^{4}\sslash{\mathbb{Z}_{2})\sslash \mathrm{Sp}{(1)}} 
    \orbisingular (\mathrm{\mathbb{C}P}^{3}\sslash{\mathbb{Z}}_{2})\sslash \mathrm{Sp}(1)\right)_{\mathrm{clsd}}
   & :=\left\{\!
    \begin{array}{l|l}
		F_{2} & \dd F_{2}=0  \\H_{3} & \dd H_{3}=\tilde{G}_{4}-\tfrac{1}{2} p_{1}(\omega)-F_{2}^{2} \\
		G_{7} & \dd G_{7}=\tfrac{1}{2} \tilde{G}_{4}\left(\tilde{G}_{4}-\tfrac{1}{2} p_{1}(\omega)\right) \\
		\tilde{G_{4}} & \dd \tilde{G_{4}}=0 \\
        \tfrac{1}{4} p_{1}(\omega) & \dd \left(\tfrac{1}{4} p_{1}(\omega)\right)=0\\
	\end{array}
    \!\!\!\right\}
    \\
   & \rightarrow
    \left\{\begin{array}{l|l}
		F_{2} & \dd F_{2}|_{X^{\mathbb{Z}_{2}}}=0  \\H_{3} & \dd H_{3}|_{X^{\mathbb{Z}_{2}}}=-\frac{1}{2} p_{1}(\omega)-F_{2}^{2} \\
		G_{7} &  G_{7}|_{X^{\mathbb{Z}_{2}}}=0 \\
		\tilde{G_{4}} &  \tilde{G_{4}}|_{X^{\mathbb{Z}_{2}}}=0 \\
        \tfrac{1}{4} p_{1}(\omega) & \dd \left(\tfrac{1}{4} p_{1}(\omega)\right)|_{X^{\mathbb{Z}_{2}}}= 0\\
	\end{array}
    \right\}
    \end{aligned}
    \end{equation}

    \medskip 
\noindent where the fluxes in the bulk $\mathrm{X}^{11}$ correspond to the first blob, and the blob below it corresponds to the fluxes along the $\mathbb{Z}_{2}$-fixed locus $\mathrm{X}^{\mathbb{Z}_{2}}$.
    In what follows below, we will denote open charts on the bulk $\mathrm{X}^{11}$ by $\mathrm{U}_{i}$, and open charts on the fixed locus $\mathrm{X}^{\mathbb{Z}_{2}}$ by $\mathrm{U}_{i}^{\mathbb{Z}_{2}}$.
\end{definition}

\begin{definition}[\bf Gauge potentials and gauge transformations]
Given the flux densities \eqref{Equitwistorial flux densities} on some $U_{i}$, we have the following traditional choice of {\it gauge potentials}
\begin{equation}
\begin{aligned}
\renewcommand{\arraystretch}{1.3}
\label{Equitwistorial gauge potentials}
	\left\{\begin{array}{l|l}
		C_{3} \in \Omega_{dR}^{3}\left(U_{i}\right) & \dd C_{3}=\tilde{G}_{4} \\
		C_{6} \in \Omega_{dR}^{6}\left(U_{i}\right) & \dd C_{6}=G_{7}-\tfrac{1}{2} C_{3}\left(\tilde{G}_{4}-\tfrac{1}{2} p_{1}(\omega)\right) \\
		B_{2} \in \Omega_{dR}^{2}\left(U_{i}\right) & \dd B_{2}=H_{3}-C_{3}+ \mathrm{CS}(\omega) + A_{1} F_{2}\\
        A_{1} \in \Omega_{dR}^{1}\left(U_{i}\right) & \dd A_{1}= F_{2}\\
    \end{array}\right\}
    \\ \rightarrow
    \left\{\begin{array}{l|l}
		
		B_{2} \in \Omega_{dR}^{2}\left(U_{i}^{\mathbb{Z}_{2}}\right) & \dd B_{2}|_{U_{i}^{\mathbb{Z}_{2}}}=H_{3}+ \mathrm{CS}(\omega) + A_{1} F_{2}\\
        A_{1} \in \Omega_{dR}^{1}\left(U_{i}^{\mathbb{Z}_{2}}\right) & \dd A_{1}|_{U_{i}^{\mathbb{Z}_{2}}}= F_{2}
    \end{array}\right\}
    \end{aligned}
\end{equation}
and the corresponding {\it gauge transformations} between the gauge potentials $\left(C_{3}, C_{6}, B_{2}, A_{1}\right)$ and $ \left(C_{3}^{\prime}, C_{6}^{\prime}, B_{2}^{\prime},A_{1}^{\prime}\right)$
\begin{equation}
\label{Equitwistorial gauge tranfomrations}
\renewcommand{\arraystretch}{1.0}
\begin{aligned}
\left\{\begin{array}{l|l}
C_{2} \in \Omega_{dR}^{2}\left(U_{i}\right) & \dd C_{2} = C_{3}^{\prime} - C_{3} \\
C_{5} \in \Omega_{dR}^{5}\left(U_{i}\right) & \dd C_{5} = C_{6}^{\prime} - C_{6} - \tfrac{1}{2} C_{3}^{\prime} C_{3} - \tfrac{1}{4}C_{2} {p_{1}}(\omega) \\
B_{1} \in \Omega_{dR}^{1}\left(U_{i}\right) & \dd B_{1} = B_{2}^{\prime} - B_{2} + C_{2} - \int dt\, \mathrm{Tr}(\omega^{\prime} - \omega) \wedge \omega_{t} - A_{0} F_{2} \\
A_{0} \in \Omega_{dR}^{0}\left(U_{i}\right) & \dd A_{0} = A_{1}^{\prime} - A_{1}
\end{array}\right\}
\\ 
\rightarrow
\left\{\!\!
\begin{array}{l|l}

B_{1} \in \Omega_{dR}^{1}\left(U_{i}^{\mathbb{Z}_{2}}\right) & \dd B_{1}|_{U_{i}^{\mathbb{Z}_{2}}} = B_{2}^{\prime} - B_{2}  - \int dt\, \mathrm{Tr}(\omega^{\prime} - \omega) \wedge \omega_{t} - A_{0} F_{2} \\
A_{0} \in \Omega_{dR}^{0}\left(U_{i}^{\mathbb{Z}_{2}}\right) & \dd A_{0}|_{U_{i}^{\mathbb{Z}_{2}}} = A_{1}^{\prime} - A_{1}
\end{array}
\right\}.
\end{aligned}
\end{equation}
{\it The gauge potentials corresponding to the fluxes that are not supported at the fixed locus $\mathrm{X}^{\mathbb{Z}_{2}}$ get decoupled at the orbi-fixed locus.}
\end{definition}
\begin{definition}[\bf Null concordances]
Given the flux densities in \eqref{Equitwistorial flux densities}, we say that a {\it null concordance} is
\begin{equation}
	\begin{array}{c}
		\left(\hat{\tilde{G}}_{4}, \hat{G}_{7}, \hat{H}_{3}, \hat{F}_{2}, \hat{p}_{1}\right) \in \Omega_{\text {dR}}^{1}\left(\orbisingular U_{i}\sslash{\mathbb{Z}}_{2} \times[0,1], \mathfrak{l}_{\orbisingular(S^{4}\sslash{\mathbb{Z}}_{2})\sslash \mathrm{Sp}(1)} \orbisingular(\mathrm{CP}^{3}\sslash{\mathbb{Z}}_{2})\sslash \mathrm{Sp}(1)\right)_{ \mathrm{clsd} }
        \end{array}
        \end{equation}
        such that
        \begin{equation}
        \label{Equitwistorial boundary}
        \renewcommand{\arraystretch}{1.3}
        \begin{array}{l}
		\iota_{0}^{*}\left(\hat{\tilde{G}}_{4}, \hat{G}_{7}, \hat{H}_{3}, \hat{F}_{2}, \hat{p}_{1}\right)|_{U_{i}^{\mathbb{Z}_{2}}}=0 \\
		\iota_{1}{ }^{*}\left(\hat{\tilde{G}}_{4}, \hat{G}_{7}, \hat{H}_{3}, \hat{F}_{2}, \hat{p}_{1}\right)|_{U_{i}^{\mathbb{Z}_{2}}}=\left(0, 0, H_{3}, F_{2}, p_{1}\right)
	\end{array}
\end{equation}
and a {\it null} concordance of concordances between a pair of them is given by
\begin{equation}
	\begin{array}{l}
			\left(\hat{\hat{\tilde{G}}}_{4}, {\hat{\hat{G}}}_{7}, {\hat{\hat{H}}}_{3}, {\hat{\hat{F}}}_{2}, \hat{\hat{p}}_{1}\right) \in \Omega_{dR}^{1}(\orbisingular U_{i} \sslash{\mathbb{Z}}_{2}\times \underbrace{[0,1]}_{t} \times \underbrace{[0,1]}_{s} ; \mathfrak{l}_{\orbisingular(S^{4}\sslash{\mathbb{Z}}_{2})\sslash \mathrm{Sp}(1)}\orbisingular(\mathrm{CP}^{3}\sslash{\mathbb{Z}}_{2})\sslash \mathrm{Sp}(1))_{\mathrm{ clsd} }
            \end{array}
            \end{equation}
            such that
            \begin{equation}
            \label{Equitwistorial higher boundary}
            \renewcommand{\arraystretch}{2}
            \begin{array}{l}
		\iota_{s=0}^{*}\left(\hat{\hat{\tilde{G}}}_{4}, {\hat{\hat{G}}}_{7}, {\hat{\hat{H}}}_{3}, {\hat{\hat{F}}}_{2}, \hat{\hat{p}}_{1}\right)=\left(0, 0, \hat{H}_{3}, \hat{F}_{2}, \hat{p}_{1}\right) \\
		\iota^{*}_{s=1}\left(\hat{\hat{\tilde{G}}}_{4}, {\hat{\hat{G}}}_{7}, {\hat{\hat{H}}}_{3}, {\hat{\hat{F}}}_{2}, \hat{\hat{p}}_{1}\right)=\left(0, 0, \hat{H}_{3}^{\prime}, \hat{F}_{2}^{\prime}, \hat{p}_{1}^{\prime}\right) \\
			\iota^{*}_{t=0}\left(\hat{\hat{\tilde{G}}}_{4}, {\hat{\hat{G}}}_{7}, {\hat{\hat{H}}}_{3}, {\hat{\hat{F}}}_{2}, \hat{\hat{p}}_{1}\right)=0 \\
			{ \iota}^{*}_{t=1}\left(\hat{\hat{\tilde{G}}}_{4}, {\hat{\hat{G}}}_{7}, {\hat{\hat{H}}}_{3}, {\hat{\hat{F}}}_{2},\hat{\hat{p}}_{1}\right) = \operatorname{pr}_{U_{i}}^{*}\left(0, 0, H_{3}, F_{2}, p_{1}\right)
		\end{array}\\
    \end{equation}
	where we have shown {\it only} the boundary conditions of the concordances and higher concordances on the chart $\mathrm{U_{i}}^{\mathbb{Z}_{2}}$ of the fixed locus $\mathrm{X}^{\mathbb{Z}_{2}}$, the bulk relations are the same as in \eqref{Twistorial Boundary conditions} and \eqref{Twistorial higher boundary conditions}.
    \end{definition}
\begin{proposition}
[\bf Gauge potentials from null concordances]
1) Every traditional gauge potential in equivariant twistorial cohomotopy arises from the null concordances in the following way:
\begin{equation}
\label{Equitwistorial gauge potentials from null concordances}
\renewcommand{\arraystretch}{1.3}
	\begin{array}{l}
		\begin{array}{rl}
		2 \mathrm{CS}(\omega)|_{U_{i}^{\mathbb{Z}_{2}}}
        & 
        = s_{3}=\int_{[0,1]}\hat{p_{1}}(\omega)\\
			B_{2}|_{U_{i}^{\mathbb{Z}_{2}}} &=\int_{[0,1]} \hat{H}_{3} \\
            A_{1}|_{U_{i}^{\mathbb{Z}_{2}}}&=\int_{[0,1]} \hat{F}_{2}\,.
		\end{array}
	\end{array}
\end{equation}
These are the results for the fixed locus, the bulk results are shown in \eqref{Twistorial gauge potentials from null concordances}.

2) The surjections from the null concordances to the usual potentials in the bulk $\mathrm{X}^{11}$ take the same forms as in \eqref{Twistorial surjections} and take the following forms on the fixed locus $\mathrm{X}^{\mathbb{Z}_{2}}$:
\begin{equation}
\label{Equitwistorial surjections}
\renewcommand{\arraystretch}{1.3}
	\begin{array}{l}
    \hat{p}_{1}= t p_{1} + \dd t\, s_{3}\\
            \hat{H}_{3}=t H_{3}+\dd t\, B_{2}   + t(1-t) A_{1} F_{2}\\
		\hat{F}_{2}=t F_{2}+ \dd t \,A_{1}.\\	
		\end{array}\\
        \end{equation}
\end{proposition}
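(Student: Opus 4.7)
The strategy is to reduce this proposition to the twistorial cohomotopy analogue established in Section 3.2 by restricting along the inclusion of the $\mathbb{Z}_{2}$-fixed locus $U_{i}^{\mathbb{Z}_{2}} \hookrightarrow U_{i}$, and then to verify that the formulas are consistent with the decoupling of the $C_{3}, C_{6}$ sector forced by the vanishing flux constraints $\tilde{G}_{4}|_{X^{\mathbb{Z}_{2}}} = 0 = G_{7}|_{X^{\mathbb{Z}_{2}}}$ in \eqref{Equitwistorial flux densities}. The corresponding concordances $\hat{\tilde{G}}_{4}, \hat{G}_{7}$ are then required (consistently with the boundary conditions \eqref{Equitwistorial boundary}) to vanish identically along the fixed locus, which is what allows the simplification to the three-field sector $(B_{2}, A_{1}, s_{3})$.

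For part (1), the formulas $s_{3} = \int_{[0,1]} \hat{p}_{1}$ and $A_{1} = \int_{[0,1]} \hat{F}_{2}$ are identical to their twistorial counterparts from \eqref{Twistorial gauge potentials from null concordances}, and the Bianchi identity checks $\dd s_{3} = p_{1}$ and $\dd A_{1} = F_{2}$ follow verbatim by Stokes' theorem as in the proof of the twistorial proposition. For the remaining relation I would apply Stokes' theorem to $B_{2}|_{U_{i}^{\mathbb{Z}_{2}}} = \int_{[0,1]} \hat{H}_{3}$ and use the Bianchi identity for $\hat{H}_{3}$ to obtain
\begin{equation*}
  \dd B_{2}\big|_{U_{i}^{\mathbb{Z}_{2}}} = H_{3}\big|_{U_{i}^{\mathbb{Z}_{2}}} - \int_{[0,1]} \bigl(\hat{\tilde{G}}_{4} - \tfrac{1}{2}\hat{p}_{1} - \hat{F}_{2}^{2}\bigr)\Big|_{U_{i}^{\mathbb{Z}_{2}}}.
\end{equation*}
The key input here is that $\hat{\tilde{G}}_{4}|_{U_{i}^{\mathbb{Z}_{2}}} \equiv 0$ along the entire concordance interval, which makes the $\hat{\tilde{G}}_{4}$ integral drop out and yields $\dd B_{2} = H_{3} + \tfrac{1}{2} s_{3} + A_{1} F_{2} = H_{3} + \mathrm{CS}(\omega) + A_{1} F_{2}$, precisely the fixed-locus Bianchi identity in \eqref{Equitwistorial gauge potentials}.

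For part (2), I would verify the explicit surjection formulas by direct substitution of the ansatze for $\hat{p}_{1}, \hat{H}_{3}, \hat{F}_{2}$ into the expressions of part (1), noting that inside the $t$-integrals only the $\dd t$-terms contribute, so that we recover $s_{3}, B_{2}, A_{1}$ on the nose; the null-concordance boundary conditions at $t = 0, 1$ are then read off by inspection. To verify the modified Bianchi identities along the fixed locus, $\dd \hat{F}_{2} = 0$ is a one-line calculation and the check of $\dd \hat{H}_{3}|_{U_{i}^{\mathbb{Z}_{2}}} = -\tfrac{1}{2} \hat{p}_{1} - \hat{F}_{2}^{2}$ repeats the twistorial computation of $\dd \hat{H}_{3}$ (using $\dd s_{3} = p_{1}$ and $\hat{F}_{2}^{2} = 2 t \dd t\, A_{1} F_{2} + t^{2} F_{2}^{2}$) but with the $\hat{\tilde{G}}_{4}$ term absent, which matches exactly the fixed-locus identity in \eqref{Equitwistorial flux densities}.

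The main obstacle is not computational but conceptual: one must check that the restriction prescription is self-consistent, i.e., that imposing $\hat{\tilde{G}}_{4} \equiv 0 \equiv \hat{G}_{7}$ on $U_{i}^{\mathbb{Z}_{2}} \times [0,1]$ (and not merely at the endpoints prescribed by \eqref{Equitwistorial boundary}) is compatible with closure under $\dd$ in the characteristic $L_{\infty}$-algebra $\mathfrak{l}_{\orbisingular}(S^{4}\sslash \mathbb{Z}_{2})\sslash \mathrm{Sp}(1) \, \orbisingular(\mathbb{C}P^{3}\sslash \mathbb{Z}_{2})\sslash \mathrm{Sp}(1)$. This follows because, upon setting $\hat{\tilde{G}}_{4} = \hat{G}_{7} = 0$, the twistorial Bianchi system collapses to exactly the reduced system governing $(\hat{H}_{3}, \hat{F}_{2}, \hat{p}_{1})$ in \eqref{Equitwistorial flux densities}, so such null concordances are admissible and the surjection construction descends to the orbi-fixed sector as claimed.
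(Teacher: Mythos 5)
Your proposal is correct and follows essentially the same route as the paper, which likewise reduces the fixed-locus statements to the twistorial computations of Section 3.2 by setting $\tilde{G}_{4}|_{X^{\mathbb{Z}_{2}}}=0$ and $G_{7}|_{X^{\mathbb{Z}_{2}}}=0$ and repeating the Stokes-theorem and direct-substitution checks. Your additional observation that $\hat{\tilde{G}}_{4}\equiv 0$ along the whole concordance cylinder (not just at the endpoints) is the consistent reading of the fixed-locus $L_{\infty}$-valued closure condition, and it is implicit in, and compatible with, the paper's argument.
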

They can be easily seen to satisfy the conditions in \eqref{Equitwistorial boundary} and also be consistent with \eqref{Equitwistorial gauge potentials}. The proofs for the fixed loci $\mathrm{X}^{\mathbb{Z}_{2}}$ follow analogous steps as in section 3.2 by setting $\tilde{G}_{4}|_{\mathrm{X}^{\mathbb{Z}_{2}}}=0,  {G}_{7}|_{\mathrm{X}^{\mathbb{Z}_{2}}}=0  $.

Let us turn to the concordance of concordances.

\begin{proposition}[\bf Gauge equivalences from null-concordance of concordances]:
1)In equivariant twistorial cohomotopy, the null concordances of concordances surject onto usual gauge transformations between the usual gauge potentials.
\begin{equation}
\label{Equitwistorial gauge equivalences from null concordances of concordances}
\renewcommand{\arraystretch}{1.3}
\begin{array}{l}
s_{2}|_{U_{i}^{\mathbb{Z}_{2}}}=\int_{s \in {[0,1]}} \int_{t \in {[0,1]}} \hat{\hat{p}}_{1}\\
	 B_{1}|_{U_{i}^{\mathbb{Z}_{2}}}=\int_{s \in{[0,1]}} \int_{t \in{[0,1]}} {\hat{\hat{H}}}_{3}\\A_{0}|_{U_{i}^{\mathbb{Z}_{2}}}= \int_{s \in{[0,1]}} \int_{t \in[0,1)} \hat{\hat{F}}_{2}\\

    \end{array}
\end{equation}
These are the results for the fixed locus, the bulk results are shown in \eqref{Twistorial gauge equivalnes from null concordances of concordances}.

2)The surjections from the null  concordance of concordances to the usual gauge transformations for the bulk $\mathrm{X}^{11}$ take the same form as in \eqref{Twistorial higher surjections}, and they take the following forms on the fixed locus $\mathrm{X}^{\mathbb{Z}_{2}}$:
\begin{equation}
    \label{Equitwistorial higher surjections}
    \renewcommand{\arraystretch}{1.6}
    \begin{array}{l}
    \hat{\hat{p}}_{1}=t p_{1}+\dd t s_{3}+s\dd t\left(s_{3}^{\prime}-s_{3}\right)-\dd s\, \dd t\, s_{2}\\
        
    \hat{\hat{H}}_{3}= t H_{3}+ \dd t\, B_{2} + t(1-t) A_{1} F_{2}
    + s\dd t\, (B_{2}^{\prime}-B_{2})\\
    + s t (1-t) (A_{1}^{\prime}- A_{1}) F_{2} + \dd s\, (t-t^2) A_{0} F_{2}- \dd s\, \dd t\, B_{1}\\
   \hat{\hat{F}}_{2}= tF_{2}+ \dd t A_{1} + s\dd t (A_{1}^{\prime}-A_{1})- \dd s\, \dd t\, A_{0}\,.
   \end{array}
    \end{equation}
    \end{proposition}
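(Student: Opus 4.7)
The plan is to verify the proposed formulas in two stages, mirroring the structure of the proof for twistorial cohomotopy (Section 3.2): first, verify that the explicit expressions in \eqref{Equitwistorial higher surjections} satisfy the defining boundary conditions \eqref{Equitwistorial higher boundary} for null-concordances of concordances on the fixed locus and are flat with respect to the (restricted) Bianchi identities; second, plug these expressions into the integral formulas \eqref{Equitwistorial gauge equivalences from null concordances of concordances} and check that the integrals reproduce $s_2$, $B_1$, $A_0$ obeying the restricted gauge-transformation equations of \eqref{Equitwistorial gauge tranfomrations}.

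First I would handle the boundary conditions. Setting $t=0$ kills every term in \eqref{Equitwistorial higher surjections} (since each summand carries a factor of $t$ or $\dd t$, and after restriction $\dd t$ vanishes); setting $t=1$ gives $\mathrm{pr}_{U_i}^{*}(p_1, H_3, F_2)$ as required (the $t(1-t)$ and $\dd t$-terms drop out); at $s=0$ and $s=1$ we recover $\hat p_1, \hat H_3, \hat F_2$ and their primed counterparts respectively. Since on the fixed locus $\tilde{G}_4|_{X^{\mathbb{Z}_2}}=0$ and $G_7|_{X^{\mathbb{Z}_2}}=0$, the components $\hat{\hat{\tilde G}}_4$ and $\hat{\hat G}_7$ vanish there automatically, matching \eqref{Equitwistorial higher boundary}.

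Next I would verify the Bianchi identities. The computation of $\dd\hat{\hat{p}}_1=0$ is identical to the twistorial case and uses $\dd s_2 = s_3' - s_3$. The identity $\dd \hat{\hat F}_2=0$ follows from $\dd A_1 = F_2$, $\dd A_0=A_1'-A_1$ by the same bookkeeping as in the proof of the twistorial surjections. The substantive check is for $\hat{\hat H}_3$: on the fixed locus the required identity is $\dd \hat{\hat H}_3 = -\tfrac12 \hat{\hat p}_1 - \hat{\hat F}_2^{\,2}$ (no $\hat{\hat{\tilde G}}_4$). Expanding $\dd\hat{\hat H}_3$ term by term, using $\dd B_2|_{U_i^{\mathbb{Z}_2}} = H_3 + \mathrm{CS}(\omega) + A_1 F_2$ and $\dd B_1|_{U_i^{\mathbb{Z}_2}} = B_2' - B_2 - \int \mathrm{Tr}(\omega'-\omega)\wedge\omega_t\,\dd t - A_0 F_2$, the contributions organize themselves exactly as in the twistorial proof but with the $C_3,C_3',C_2$ summands absent. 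The $s_3$, $s_3'$, $s_2$ pieces combine into $-\tfrac12\hat{\hat p}_1$ and the $A_1, A_1', A_0$ pieces, together with $\hat{\hat F}_2^{\,2} = t^2 F_2^2 + 2t\,\dd t\, A_1 F_2 + 2st\,\dd t\,(A_1'-A_1)F_2 - 2t\,\dd s\,\dd t\, A_0 F_2$, reproduce the required $-\hat{\hat F}_2^{\,2}$.

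Finally, plugging \eqref{Equitwistorial higher surjections} into \eqref{Equitwistorial gauge equivalences from null concordances of concordances} and using Stokes/Fubini (isolating only the coefficients of $\dd s\,\dd t$ that survive the double integration), one reads off $s_2 = \int\!\!\int \hat{\hat p}_1$, $A_0=\int\!\!\int\hat{\hat F}_2$ and $B_1=\int\!\!\int\hat{\hat H}_3$; the $\dd s\,\dd t$-coefficients in the surjection formulas are precisely $s_2$, $A_0$, $B_1$, so this recovers the claimed gauge potentials for the transformation. The main obstacle, as in the twistorial case, is the $\hat{\hat H}_3$ identity where the $A_1 F_2$ coupling mixes with the $(1{-}t)$ and $(t{-}t^2)$ polynomial prefactors; one must track the $\dd s$, $\dd t$ and $\dd s\,\dd t$ components separately and see the cancellations that eliminate all terms not matching $\hat{\hat{\tilde G}}_4 - \tfrac12\hat{\hat p}_1 - \hat{\hat F}_2^{\,2}$, which simplifies here because $\hat{\hat{\tilde G}}_4=0$ on the fixed locus but correspondingly removes the cancellation partners available in the bulk, so the $F_2^2$ and $p_1$ terms must balance on their own. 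All other verifications reduce to the corresponding ones in Section 3.2 by the substitutions $\tilde G_4, G_7, C_3, C_6, C_2, C_5 \mapsto 0$.
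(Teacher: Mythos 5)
Your proposal is correct and follows essentially the same route as the paper: the paper's own (very terse) proof simply observes that the boundary conditions \eqref{Equitwistorial higher boundary} and consistency with \eqref{Equitwistorial gauge tranfomrations} follow from the twistorial computations of Section 3.2 restricted to the fixed locus, i.e.\ with $\tilde G_4, G_7$ and their associated potentials $C_3, C_6, C_2, C_5$ set to zero, which is exactly your reduction. Your explicit check that $\dd\hat{\hat H}_3 = -\tfrac12\hat{\hat p}_1 - \hat{\hat F}_2^{\,2}$ on $X^{\mathbb{Z}_2}$ and the reading-off of $s_2, B_1, A_0$ as the $\dd s\,\dd t$-coefficients is a correct (and more detailed) spelling-out of what the paper leaves implicit.
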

    
    These can be easily seen to satisfy the conditions in \eqref{Equitwistorial higher boundary} and also consistent with \eqref{Equitwistorial gauge tranfomrations}.

\section{Conclusion}
In this article, we have emphasized that flux quantization laws require the extra fields to make a theory globally defined. Motivated by the fact that the conjectured cohomology for eleven-dimensional supergravity is (twisted equivariant differential) cohomotopy, we have shown the modified Bianchi identities for M5 brane probes in 11D supergravity for tangentially twisted cohomotopy, twistorial cohomotopy and for equivariant twistorial cohomotopy. We have proved in these cases that the null concordances surject onto the usual differential forms of gauge potentials. Also, the concordances between the null concordances serve to provide the gauge transformations between the gauge potentials via surjections, thus providing the extra field content required to completely understand the {\it global} picture of M5 branes in 11D supergravity for tangentially twisted cohomotopy, twistorial cohomotopy and equivariant twistorial cohomotopy.

\end{document}